\documentclass[11pt]{article}       
\usepackage{geometry}               
\usepackage{amsmath,amssymb,bm}
\usepackage{amsthm}
\usepackage{mathtools}
\usepackage{xcolor}
\usepackage[linesnumbered, ruled]{algorithm2e}
\usepackage{cite}
\usepackage[hidelinks,bookmarks=false]{hyperref}
\usepackage{cleveref}
\usepackage{makecell}
\usepackage{booktabs}
\usepackage{tabularx}

\newtheorem{theorem}{Theorem}
\newtheorem{lemma}{Lemma}

\newtheorem{remark}{Remark}
\newtheorem{assumption}{Assumption}
\newtheorem{proposition}{Proposition}

\newcommand{\xpost}{\bar{x}}
\newcommand{\xprior}{\bar{x}^-}
\newcommand{\Xpostbf}{\Sigma_{x,t-1}}

\newcommand{\Xpost}{\Sigma_{x,t}}
\newcommand{\Xpostopt}{\Sigma_{x,t}^*}
\newcommand{\Xprior}{\Sigma_{x,t}^-}

\newcommand{\Xpostinit}{\Sigma_{x,0}}
\newcommand{\Xpriorinit}{\Sigma_{x,0}^-}

\newcommand{\Xpriorinitopt}{\Sigma_{x,0}^{-,*}}

\newcommand{\xnom}{\hat{x}_0^-}
\newcommand{\Xnom}{\hat{\Sigma}_{x,0}^-}
\newcommand{\Pdist}{\mathbb{P}}
\newcommand{\Qdist}{\mathbb{Q}}
\newcommand{\Qhat}{\hat{\mathbb{Q}}}
\newcommand{\ambset}{\mathbb{D}}
\newcommand{\Gauss}{\mathcal{N}}

\newcommand{\Bures}{\mathcal{B}}
\newcommand{\Wass}[2]{W_2(#1, #2)}

\newcommand{\real}[1]{\mathbb{R}^{#1}}
\newcommand{\symm}[1]{\mathbb{S}^{#1}}
\newcommand{\psd}[1]{\symm{#1}_{+}}
\newcommand{\pd}[1]{\symm{#1}_{++}}
\DeclareMathOperator{\Tr}{Tr}
\DeclareMathOperator{\Cov}{{Cov}}
\Crefname{assumption}{Assumption}{Assumptions}
\Crefname{algorithm}{Algorithm}{Algorithms}
\Crefname{figure}{Fig.}{Figs.}

\title{Residual-Aware Distributionally Robust EKF:\\ Absorbing Linearization Mismatch via Wasserstein Ambiguity%
\thanks{This work was supported in part by the Information and Communications Technology Planning and Evaluation
(IITP) grants funded by MSIT No. 2022-0-00124, No. 2022-0-00480 and No. RS-2021-II211343,
Artificial Intelligence Graduate School Program (Seoul National University),
the National Research Foundation of Korea (NRF) grant funded by MSIT No. RS-2026-25477173,
the Air Force Office of Scientific Research Grant (AFOSR) Grant AF FA9550-25-1-0274, the National Aeronautics and Space Administration (NASA) under Grant 80NSSC22M0070, 
the National Science Foundation (NSF) under Grants CMMI 2135925, CPS 2311085 and IIS 2331878, and the Higher Education and Science Committee of the RA under Grant 24FP-C017.}
} % 
\author{
Minhyuk Jang$^\dagger$, 
Jungjin Lee$^\dagger$, 
Astghik Hakobyan, 
Naira Hovakimyan, and
Insoon Yang%
\thanks{The first two authors contributed equally.}%
\thanks{M. Jang and N. Hovakimyan are with the Department of Mechanical Science and Engineering, Grainger College of Engineering, University of Illinois Urbana-Champaign, Urbana, IL, USA {\tt\small \{jang64, nhovakim\}@illinois.edu}. J. Lee and I. Yang are with the Department of Electrical and Computer Engineering and Automation and Systems Research Institute, Seoul National University, Seoul, South Korea {\tt\small \{jungbbal, insoonyang\}@snu.ac.kr}. A. Hakobyan is with the Center for Scientific Innovation and Education and the National Polytechnic University of Armenia, Yerevan, Armenia {\tt\small astghik.hakobyan@csie.am}.}
}

\date{}

\begin{document}

\maketitle

\begin{abstract}                          
The extended Kalman filter (EKF) is a cornerstone of nonlinear state estimation, yet its performance is fundamentally limited by noise-model mismatch and linearization errors. We develop a residual-aware distributionally robust EKF that addresses both challenges within a unified Wasserstein distributionally robust state estimation framework. The key idea is to treat linearization residuals as uncertainty and absorb them into an effective uncertainty model captured by a stage-wise ambiguity set, enabling noise-model mismatch and approximation errors to be handled within a single formulation. This approach yields a computable effective radius along with deterministic upper bounds on the prior and posterior mean-squared errors of the true nonlinear estimation error. The resulting filter admits a tractable semidefinite programming reformulation while preserving the recursive structure of the classical EKF. Simulations on coordinated-turn target tracking and uncertainty-aware robot navigation demonstrate improved estimation accuracy and safety compared to standard EKF baselines under model mismatch and nonlinear effects.
\end{abstract}

\section{Introduction}

State estimation is a central problem in control, robotics, and signal processing, where the goal is to infer the hidden state of a dynamical system from noisy and partial measurements. The accuracy of these estimates directly determines the quality of downstream feedback control, motion planning, and safe navigation. For linear systems, the Kalman filter achieves this optimally in the minimum mean-squared error (MMSE) sense~\cite{kalman1960}. For nonlinear systems, exact Bayesian filtering is generally intractable, and one resorts to approximations such as the extended Kalman filter (EKF), unscented Kalman filter, and particle filters~\cite{simon2006optimal}.

Among these, the EKF remains the method of choice due to its recursive structure and computational efficiency. Its performance is, however, fundamentally limited by two compounding failure modes: \emph{noise-model mismatch}, since true noise distributions are rarely known exactly and must be approximated from data or prior assumptions; and \emph{linearization error}, since each predict--update cycle discards higher-order Taylor remainder terms that accumulate over time and can lead to filter divergence.

These two failure modes have largely been studied in isolation. Robust filtering methods address noise-model and parameter uncertainty~\cite{simon2006optimal,shaked2001new,speyer2003optimization,whittle1981risk,boel2002robustness}, while nonlinear filtering approaches handle Taylor remainders within EKF-type updates~\cite{noack2010bounding,ding2024high}. Consequently, EKF-type estimators that jointly address both remain limited, particularly in output-feedback settings where estimation errors directly impact control and safety~\cite{lorenzetti2020simple,kogel2017robust}.

Distributionally robust state estimation (DRSE) has emerged as a principled framework for handling noise-model uncertainty. In the linear setting, existing works rely on either divergence-based~\cite{levy2012robust,zorzi2016robust,zorzi2017robustness} or Wasserstein ambiguity set formulations~\cite{NEURIPS_DRKF,nguyen2023bridging,han2024distributionally,kargin2024distributionally,jang2025steady,ding2025exactly,wang2021distributionally,si2023distributionally,feng2026sinkhorn}, both yielding tractable solutions with statistically meaningful guarantees. For nonlinear systems, however, DRSE remains underdeveloped; distributionally robust particle filters~\cite{wang2022distributionally} illustrate the potential, but their computational cost limits real-time applicability. Extending DRSE to nonlinear EKF-type estimators that jointly address noise-model uncertainty and linearization residuals thus remains an open challenge.

To address this, we embed state-dependent linearization residuals directly into a Wasserstein ambiguity set,  unifying robustness to noise-model mismatch and nonlinear approximation error within a single framework. This leads to a novel \emph{residual-aware distributionally robust EKF} (DR-EKF) that preserves the recursive EKF while admitting a tractable semidefinite programming (SDP) reformulation and formal performance guarantees for the true nonlinear system.

The main contributions of this paper are as follows:
\begin{itemize}
\item 
We reinterpret linearization residuals as uncertainty and absorb them into a Wasserstein ambiguity set, thereby unifying robustness to noise-model mismatch and nonlinear approximation error.

\item We construct a computable stage-wise ambiguity radius that guarantees feasibility of the true effective uncertainty law, and derive a deterministic recursive certificate for the prior and posterior mean-squared errors of the true nonlinear estimation error.

\item  We derive a tractable SDP reformulation of the DRSE problem and specialize it to the proposed ambiguity radius, yielding a DR-EKF algorithm that preserves the recursive EKF structure while accounting for residual-induced uncertainty.

\item  We demonstrate improved estimation accuracy and enhanced closed-loop safety, including substantial reductions in collision rates, under noise-model misspecification and nonlinear effects.
\end{itemize}

The remainder of the paper is organized as follows:~\Cref{sec:prelim} presents the preliminaries,~\Cref{sec:drekf} develops the proposed DR-EKF, ambiguity set construction, and recursive MSE certificate, and~\Cref{sec:exp} reports simulation results.

\section{Preliminaries}\label{sec:prelim}
\subsection{Notation}

For a random variable $z$, let $\mathcal{L}(z)$ and $\mathcal{L}(z\mid\mathcal{G})$ denote its law and conditional law given a $\sigma$-algebra $\mathcal{G}$, respectively. $\mathcal{P}_2(\mathbb{R}^d)$ denotes the set of probability measures on $\mathbb{R}^d$ with finite second moments. Let $\symm{n}$, $\psd{n}$, $\pd{n}$ denote the sets of symmetric, positive semidefinite, and positive definite matrices in $\mathbb{R}^{n\times n}$. For $A,B\in\symm{n}$, $A\succeq B$ means 
$A-B\in\psd{n}$, and $\lambda_{\min}(A)$ denotes the smallest eigenvalue of $A$. Finally, $\|\cdot\|$ denotes the Euclidean norm for vectors and the spectral norm for 
matrices.

\subsection{Problem Setup}

We consider the discrete-time nonlinear stochastic system
\begin{align}
x_{t+1} &= f(x_t,u_t) + w_t, \label{eq:dynamics} \\
y_t &= h(x_t) + v_t, \label{eq:measurement}
\end{align}
where $x_t \in \mathbb{R}^{n_x}$ is the state, $u_t \in \mathbb{R}^{n_u}$ is a known input, and $y_t \in \mathbb{R}^{n_y}$ is the measurement at time $t$. The maps $f: \mathbb{R}^{n_x} \times \mathbb{R}^{n_u} \to \mathbb{R}^{n_x}$ and $h: \mathbb{R}^{n_x} \to \mathbb{R}^{n_y}$ are assumed twice continuously differentiable.  The initial state $x_0 \in \mathbb{R}^{n_x}$, process noise $w_t \in \mathbb{R}^{n_x}$, and measurement noise $v_t \in \mathbb{R}^{n_y}$ are random vectors with unknown distributions $\Qdist_{x,0}, \Qdist_{w,t} \in \mathcal{P}_2(\mathbb{R}^{n_x})$ and $\Qdist_{v,t} \in \mathcal{P}_2(\mathbb{R}^{n_y})$, respectively. We assume throughout that $x_0$, $\{w_t\}_{t\ge 0}$, and $\{v_t\}_{t\ge 0}$ are mutually independent, and that each of $\{w_t\}_{t\ge 0}$ and $\{v_t\}_{t\ge 0}$ is temporally independent.

At each time $t \ge 0$, the estimator has access to the measurement history $Y_t := \{y_0,\dots,y_t\}$ with $\mathcal{Y}_t := \sigma(Y_t)$ denoting the associated information $\sigma$-algebra. For convenience, we define $\mathcal Y_{-1}:=\{\emptyset,\Omega\}$. The objective is to estimate the state $x_t$ based on the available information $\mathcal{Y}_t$. If the noise distributions were known exactly, a natural formulation would be the stage-wise conditional MMSE problem
\begin{equation}\label{eq:stagewise-mmse}
\min_{\psi_t \in \mathcal{F}_t}
\ \mathbb{E}\!\left[
\|x_t - \psi_t(Y_t)\|^2
\,\middle|\,
\mathcal{Y}_{t-1}
\right],
\end{equation}
where $\mathcal{F}_t :=
\left\{
\psi_t : (\mathbb{R}^{n_y})^{t+1} \to \mathbb{R}^{n_x}
\ \middle|\
\psi_t \text{ is measurable}
\right\}$
is the set of admissible estimators. The optimal estimator $\xpost_t = \psi_t(Y_t)$ is then the conditional mean of the state.

% For nonlinear systems, the MMSE estimator~\eqref{eq:stagewise-mmse} is generally intractable. In practice, it is therefore approximated using recursive methods such as the EKF, which linearizes the nonlinear dynamics and measurement models around the current estimate. However, this approximation introduces additional errors due to higher-order Taylor remainder terms arising from the local linearization. 

For nonlinear systems, the MMSE estimation problem~\eqref{eq:stagewise-mmse} is generally intractable. In practice, it is approximated using recursive methods such as the EKF, which linearizes the nonlinear dynamics and measurement models around the current estimate. However, this local approximation introduces additional errors due to higher-order Taylor remainder terms.

\subsection{Distributionally Robust State Estimation}

While EKF-type methods provide a tractable approximation to the MMSE estimator, their performance can degrade due to two fundamental sources of uncertainty: the true noise distributions $\Qdist_{x,0}$, $\Qdist_{w,t}$, and $\Qdist_{v,t}$ are typically unknown and must be approximated from finite data or modeling assumptions, and the local linearizations of $f$ and $h$ introduce higher-order Taylor remainder terms acting as state-dependent 
perturbations. Existing Wasserstein DRSE formulations address the first source for linear systems~\cite{NEURIPS_DRKF,nguyen2023bridging,jang2025steady}, but do not account for linearization-induced uncertainty. These considerations motivate a distributionally robust formulation of~\eqref{eq:stagewise-mmse} that addresses both.

To describe these effects in a unified way, we introduce the stacked noise vector
\begin{align*}
\epsilon_t :=
\begin{cases}
[x_0^{\top}, v_0^{\top}]^{\top}, & t = 0,\\
[w_{t-1}^{\top}, v_t^{\top}]^{\top}, & t \ge 1,
\end{cases}
\end{align*}
which collects all exogenous uncertainty entering the system at stage $t$. Under local linearization, the Taylor remainder terms act as additional uncertainty.
As we will show in the next section, these residuals can be absorbed into an effective stacked uncertainty $\tilde{\epsilon}_t=\epsilon_t+\Delta_t$, where $\Delta_t$ captures the linearization residuals. 

% The local error dynamics then depend on the centered effective uncertainty $\tilde{\epsilon}_t-\hat{\epsilon}_t$.

We therefore consider the DR-MMSE estimation problem
\begin{equation}\label{eq:stagewise-minimax}
\inf_{\psi_t \in \mathcal{F}_t}
\ \sup_{\Pdist_{\epsilon,t} \in \ambset_{\epsilon,t}}
\ \mathbb{E}_{\Pdist_{\epsilon,t}}
\!\left[
\|x_t - \psi_t(Y_t)\|^2
\,\middle|\,
\mathcal{Y}_{t-1}
\right],
\end{equation}
where $\Pdist_{\epsilon,t}$ is a candidate marginal distribution of $\epsilon_t$ and $\ambset_{\epsilon,t} \subset \mathcal{P}_2(\mathbb{R}^{n_x+n_y})$ is an ambiguity set encoding our uncertainty in the nominal stacked-noise law.

Strictly speaking, for $t\ge 1$ the law $\Pdist_{\epsilon,t}$ alone does not determine the conditional law of $x_t$; one must also specify the carried state law from stage $t-1$. In~\Cref{subsec:tractable_dr_ekf}, we make this precise through a local stage-wise surrogate that models $\mathcal L(e_{t-1},\epsilon_t\mid\mathcal Y_{t-1})$ directly. With this understanding,~\eqref{eq:stagewise-minimax} is the distributionally robust counterpart of~\eqref{eq:stagewise-mmse}.

\subsection{Wasserstein Ambiguity Sets}\label{subsec:ambset}

In DRSE, the ambiguity set $\ambset_{\epsilon,t}$ is used to capture uncertainty in the noise model. We adopt Wasserstein ambiguity sets centered at a nominal noise distribution, which quantify uncertainty through the distance between probability measures.
Specifically, the type-2 Wasserstein distance between two probability measures $\Pdist,\Qdist \in \mathcal{P}_2(\mathbb{R}^d)$ is defined as
\[
\Wass{\Pdist}{\Qdist}
:=
\left(
\inf_{\pi \in \Pi(\Pdist,\Qdist)}
\int_{\mathbb{R}^d \times \mathbb{R}^d}
\|z-z'\|^2 \, d\pi(z,z')
\right)^{1/2},
\]
where $\Pi(\Pdist,\Qdist)$ denotes the set of all couplings with marginals $\Pdist$ and $\Qdist$. Intuitively, $W_2$ measures the minimum transportation cost between probability distributions, making it a natural metric for quantifying deviations between the true and nominal noise laws~\cite{villani2009optimal}. In our setting, it is especially appealing because it is compatible with moment-based uncertainty and yields tractable DRO reformulations~\cite{kuhn2019wasserstein,mohajerin2018data}.

Using this metric, we define the stage-wise ambiguity set for the stacked noise $\epsilon_t \in \mathbb{R}^{n_x+n_y}$ as
\begin{equation}\label{eq:amb_set}
\ambset_{\epsilon,t}(\theta_t)
:=
\left\{
\Pdist \in \mathcal{P}_2(\mathbb{R}^{n_x+n_y})
\ \middle|\
\Wass{\Pdist}{\Qhat_{\epsilon,t}} \le \theta_t
\right\},
\end{equation}
where $\Qhat_{\epsilon,t} \in \mathcal{P}_2(\mathbb{R}^{n_x+n_y})$ is the prescribed nominal distribution of the stacked noise and $\theta_t \ge 0$ is the radius, which controls the trade-off between performance and robustness. 

We take the nominal distribution of $\epsilon_t$ to be Gaussian with independent blocks. In particular, for each $t\geq 0$, let $\Qhat_{\epsilon,t}=\Gauss(\hat\epsilon_t,\hat\Sigma_{\epsilon,t})$ with $\hat\Sigma_{\epsilon,t}\in\pd{n_x+n_y}$, where
\[
\Qhat_{\epsilon,0} = \Qhat_{x,0}^- \otimes \Qhat_{v,0}, \quad \Qhat_{\epsilon,t} = \Qhat_{w,t-1}\otimes \Qhat_{v,t}, \; t\geq 1,
\]
where $\Qhat_{x,0}^- = \Gauss(\xnom, \Xnom)$, $\Qhat_{w,t-1} = \Gauss(\hat{w}_{t-1},\hat{\Sigma}_{w,t-1})$, and $\Qhat_{v,t}=\Gauss(\hat{v}_t, \hat{\Sigma}_{v,t})$ denote the nominal laws of the initial state, process noise, and measurement noise, respectively.

\begin{remark}
Although the nominal stacked covariance is taken block diagonal, the ambiguity set is defined over the joint stacked law. Consequently, the resulting optimization problem implicitly allows for full covariance structures, including the cross block $\Sigma_{wv,t-1}$. Thus, the least-favorable process--measurement correlations can emerge without requiring a correlated nominal model.
\end{remark}

The ambiguity set~\eqref{eq:amb_set} forms the basis for the tractable residual-aware DR-EKF developed in the next section.

\section{Residual-Aware Approach for Distributionally Robust EKF}\label{sec:drekf}

In this section, we develop a residual-aware DR-EKF that incorporates linearization residuals into a Wasserstein ambiguity set, unifying robustness to noise-model mismatch and linearization error within the recursive EKF structure.
At a high level, we introduce the \emph{residual-as-uncertainty} principle, whereby nonlinear approximation error is treated as uncertainty and absorbed into a single ambiguity set.
The construction proceeds in three steps: $(i)$ derive an effective uncertainty model that captures both stochastic noise and Taylor residuals, $(ii)$ reformulate the DRSE problem as a tractable SDP with EKF structure, and $(iii)$ construct a computable ambiguity radius with a recursive MSE certificate.

\subsection{Exact Nonlinear Error Dynamics and Effective Uncertainty Model}\label{subsec:error_dynamics_effective_noise}

We first derive the nonlinear error dynamics and show how linearization residuals can be captured in an effective uncertainty term.
We adopt an EKF-style predict--update structure, linearizing $f$ and $h$ around the current estimate at each stage (e.g.,~\cite[Chapter 13]{simon2006optimal}). Given the posterior estimate $\xpost_{t-1}$ at 
time $t-1$, the filter forms the prior mean
\begin{equation}
\label{eq:dr-ekf-prior-mean}
\xprior_t = f(\xpost_{t-1}, u_{t-1}) + \hat{w}_{t-1},
\end{equation}
and computes the Jacobians
\[
A_{t-1} \coloneqq \frac{\partial f}{\partial x}(\xpost_{t-1},u_{t-1}), \qquad C_t \coloneqq \frac{\partial h}{\partial x}({\xprior_t}).
\]
Then, the prior and posterior estimation errors are given by
$e_t^- \coloneqq x_t - \xprior_t$ and $e_t \coloneqq x_t - \xpost_t$.
To isolate the effect of linearization, we introduce the exact Taylor remainder maps
\[
\begin{split}
r^f_{t-1}(e) &\coloneqq f(\xpost_{t-1}+e,u_{t-1}) 
- f(\xpost_{t-1},u_{t-1}) - A_{t-1}e,\\
r^h_t(e) &\coloneqq h(\xprior_t+e) - h(\xprior_t) - C_t e.
\end{split}
\]

Substituting $x_{t-1}=\xpost_{t-1}+e_{t-1}$ into~\eqref{eq:dynamics} and subtracting the nominal prediction~\eqref{eq:dr-ekf-prior-mean} produces the exact prior error dynamics
\begin{equation}
e_t^-=A_{t-1} e_{t-1} + (w_{t-1}-\hat w_{t-1}) + r^f_{t-1}(e_{t-1}).
\label{eq:error-dynamics-prior}
\end{equation}
Similarly, substituting $x_t=\xprior_t+e_t^-$ into~\eqref{eq:measurement} and subtracting the nominal predicted measurement $h(\xprior_t)+\hat v_t$ gives the innovation
\begin{equation}
\nu_t \coloneqq y_t - h(\xprior_t) - \hat v_t = C_t e_{t}^{-}+
(v_t-\hat v_t)+
r^h_t(e_t^-).
\label{eq:error-dynamics-meas}
\end{equation}

The exact nonlinear error dynamics, therefore, take the form of a locally linear model driven by effective uncertainty. Specifically, introducing
\[
\tilde{w}_{t-1}
\coloneqq
w_{t-1}+r^f_{t-1}(e_{t-1}),
\quad
\tilde{v}_t
\coloneqq
v_t+r_t^h(e_t^-),
\]
the dynamics~\eqref{eq:error-dynamics-prior} and~\eqref{eq:error-dynamics-meas} can be written as
\begin{equation*}
e_t^-=
A_{t-1}e_{t-1}
+\bigl(\tilde{w}_{t-1}-\hat{w}_{t-1}\bigr),\; \nu_t= C_t e_t^-
+\bigl(\tilde{v}_t-\hat{v}_t\bigr).
\end{equation*}
Introducing the effective stacked uncertainty variable,
\begin{equation}
\label{eq:effective_residual}
\tilde\epsilon_t \coloneqq
\begin{cases}
[x_0^{\top},\tilde{v}_0^{\top}]^{\top}, & t=0,\\
[\tilde{w}_{t-1}^{\top},\tilde{v}_t^{\top}]^{\top}, & t\ge 1,
\end{cases}
\end{equation}
the uncertainty in the local error equations is captured by the law $\mathcal{L}(\tilde\epsilon_t)$, which combines stochastic noise and Taylor residuals into a single uncertainty model. This representation forms the basis for treating linearization residuals as uncertainty. 
Moreover, $\tilde\epsilon_t$ differs from the nominal stacked noise $\epsilon_t$ only through the residual term $\Delta_t := \tilde\epsilon_t-\epsilon_t$. Thus, enlarging the Wasserstein ambiguity set by the magnitude of $\Delta_t$ suffices to contain the effective uncertainty law, as formalized in~\Cref{subsec:computable_radius}.

\subsection{Stage-wise SDP Reformulation Under Local Linear-Gaussian Surrogate}\label{subsec:tractable_dr_ekf}

Following standard EKF practice, we approximate the nonlinear filtering problem at each time stage $t$ by a local linear-Gaussian surrogate. Under this surrogate, the DRSE problem reduces to a tractable stage-wise distributionally robust linear MMSE problem. This subsection develops the baseline DRSE formulation; linearization residuals will be incorporated in the next step by treating them as uncertainty.

Conditioned on $\mathcal Y_{t-1}$, we approximate the nonlinear system by a local linear-Gaussian surrogate defined as follows:
\begin{enumerate}
\item[(i)] The conditional law of the state is approximated as Gaussian, $\mathcal{L}(x_t \mid \mathcal Y_{t-1}) \approx \Gauss(\xprior_t,\Xprior)$.
\item[(ii)] The nonlinear innovation is approximated by the affine model $\nu_t \approx C_t(x_t-\xprior_t) + (v_t-\hat v_t) := \hat{\nu}_t$.
\end{enumerate}
The exact conditional joint law factorizes as $\mathcal{L}(e_{t-1},\epsilon_t \mid \mathcal Y_{t-1}) = \mathcal{L}(e_{t-1}\mid \mathcal Y_{t-1}) \otimes \mathcal{L}(\epsilon_t)$, reflecting the independence of past estimation error and current noise. In the surrogate, we replace $\mathcal{L}(e_{t-1}\mid \mathcal Y_{t-1})$ by a Gaussian approximation and retain this product structure only at the surrogate level.

Under this surrogate, the stage-wise surrogate of DR-MMSE problem~\eqref{eq:stagewise-minimax} depends only on the first two moments of the stacked noise distribution and thus admits an exact finite-dimensional reformulation in terms of covariance matrices.

\begin{proposition}[Stage-wise SDP reformulation]
\label{prop:sdp}
Fix $t \ge 1$, and let $\Xpostbf$ be the posterior covariance carried from stage $t-1$. Under the local linear-Gaussian approximation, the stage-wise surrogate of DR-MMSE estimation problem~\eqref{eq:stagewise-minimax} is equivalent to the following finite-dimensional SDP:
\begin{align}
\begin{split}
\max_{\substack{\Xprior,\Xpost, \Sigma_{w,t-1},\\ \Sigma_{v,t},
\Sigma_{wv,t-1},Z_t}}
\quad & \Tr(\Xpost)\\
\mathrm{s.t.}\quad
& \Xprior = A_{t-1}\Xpostbf A_{t-1}^{\top} + \Sigma_{w,t-1}, \\
&
\begin{bmatrix}
\Xprior - \Xpost & T_t \\
T_t^{\top} & S_t
\end{bmatrix}
\succeq 0,\;
\begin{bmatrix}
\hat\Sigma_{\epsilon,t} & Z_t \\
Z_t^{\top} & \Sigma_{\epsilon,t}
\end{bmatrix}
\succeq 0,\\
& \Tr\bigl(\Sigma_{\epsilon,t} + \hat\Sigma_{\epsilon,t} - 2Z_t\bigr)\le \theta_t^2,\\
&\Sigma_{\epsilon,t} \succeq \lambda_{\min}(\hat{\Sigma}_{\epsilon,t})I_{n_x+n_y},\\
& \Sigma_{\epsilon,t}
=
\begin{bmatrix}
\Sigma_{w,t-1} & \Sigma_{wv,t-1} \\
\Sigma_{wv,t-1}^{\top} & \Sigma_{v,t}
\end{bmatrix}
\in \psd{n_x+n_y},\\
& \Xprior,\Xpost,\Sigma_{w,t-1}\in \psd{n_x},
\Sigma_{v,t}\in \psd{n_y}, \Sigma_{wv,t-1}\in \real{n_x\times n_y},
Z_t\in \mathbb{R}^{(n_x+n_y)\times (n_x+n_y)},
\end{split}
\label{eq:sdp_t}
\end{align}
where
\begin{align}
T_t &:= \Xprior C_t^{\top} + \Sigma_{wv,t-1}, \notag\\
S_t &:= C_t\Xprior C_t^{\top}
      + \Sigma_{v,t}
      + C_t\Sigma_{wv,t-1}
      + \Sigma_{wv,t-1}^{\top}C_t^{\top}. \notag
\end{align}
Moreover, the SDP~\eqref{eq:sdp_t} attains its optimum. If $\Xpostopt$ and $\Sigma_{\epsilon,t}^*$ are optimal for~\eqref{eq:sdp_t}, with associated blocks $(T_t^*,S_t^*)$, then the least-favorable stacked noise law for the surrogate is $\Gauss(\hat\epsilon_t,\Sigma_{\epsilon,t}^*)$, and the corresponding minimax-optimal posterior estimate $\xpost_t=\psi_t^*(Y_t)$ admits the innovation form
\begin{equation}\label{eq:dr_estimator}
\psi_t^*(Y_t)=\xprior_t+K_t^* \nu_t,
\end{equation}
where $K_t^* = T_t^*(S_t^*)^{-1}$ is the robust gain.
\end{proposition}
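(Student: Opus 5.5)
Under the surrogate, the stage-$t$ problem is a linear-Gaussian estimation problem in the jointly distributed pair $(e_{t-1},\epsilon_t)$. The plan follows the now-standard Wasserstein DRKF route (cf.~\cite{NEURIPS_DRKF,nguyen2023bridging}) in four steps: (a) reduce the inner supremum to covariances, (b) reduce the outer infimum to an affine estimator, (c) interchange inf and sup by a minimax theorem, and (d) rewrite the resulting concave maximization of the Schur complement as the SDP~\eqref{eq:sdp_t}.

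\textbf{Step (a): moment reduction.} Fix a candidate $\Pdist_{\epsilon,t}$ with mean $\mu$ and covariance $\Sigma_{\epsilon,t}$. Since $e_{t-1}\sim\Gauss(0,\Xpostbf)$ is independent of $\epsilon_t$, the vector $(x_t,\nu_t)$ is an affine image of $(e_{t-1},\epsilon_t)$ under the surrogate. Its covariance is
\[
\begin{bmatrix}\Xprior & T_t\\ T_t^\top & S_t\end{bmatrix},
\]
with $\Xprior=A_{t-1}\Xpostbf A_{t-1}^\top+\Sigma_{w,t-1}$, and $T_t,S_t$ exactly as stated. This is a direct computation using $x_t-\xprior_t = A_{t-1}e_{t-1}+(w_{t-1}-\hat w_{t-1})$ and $\hat\nu_t = C_t(x_t-\xprior_t)+(v_t-\hat v_t)$.

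\textbf{Step (b): reduction to affine estimators and minimax.} For any affine estimator $\xprior_t+K\nu_t+c$, the conditional MSE depends only on the first two moments. Minimizing over $c$ and accounting for the mean shift gives a cost $\Tr\bigl((I,-K)\,\Sigma\,(I,-K)^\top\bigr)$ plus a nonnegative mean term. Among all laws in the Wasserstein ball, the Gelbrich bound
\[
W_2^2\ge\|\mu-\hat\epsilon_t\|^2+\Tr\bigl(\Sigma+\hat\Sigma-2(\hat\Sigma^{1/2}\Sigma\hat\Sigma^{1/2})^{1/2}\bigr)
\]
is tight for Gaussians. Hence the worst case can be taken Gaussian, and one checks that mean perturbations are dominated, so $\mu=\hat\epsilon_t$ is worst. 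This gives the Gelbrich constraint set on $\Sigma_{\epsilon,t}$.

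The interchange of inf and sup is the main obstacle. I would argue it via Sion's theorem: the objective is convex quadratic in $K$ and linear in $\Sigma_{\epsilon,t}$, and the Gelbrich set is convex and compact. The extra constraint $\Sigma_{\epsilon,t}\succeq\lambda_{\min}(\hat\Sigma_{\epsilon,t})I$ guarantees $S_t\succ0$, which is needed for the inner minimizer to exist.

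For a Gaussian law, the unrestricted optimal estimator over all measurable $\psi_t$ is the conditional mean, which is affine. So the inf over $\mathcal F_t$ is bounded below by the Gaussian saddle value. Conversely, the affine robust estimator upper-bounds the minimax value, which justifies restricting to affine rules. The inner minimum over $K$ gives the Schur complement $\Xprior-T_tS_t^{-1}T_t^\top$ with $K=T_tS_t^{-1}$.

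\textbf{Step (c): SDP rewriting.} Maximizing $\Tr(\Xprior-T_tS_t^{-1}T_t^\top)$ is equivalent to maximizing $\Tr(\Xpost)$ subject to $\Xpost\preceq\Xprior-T_tS_t^{-1}T_t^\top$. By Schur complements with $S_t\succ0$, this is the first LMI. The term $\Tr\bigl((\hat\Sigma^{1/2}\Sigma\hat\Sigma^{1/2})^{1/2}\bigr)$ equals $\max\{\Tr Z:\begin{bmatrix}\hat\Sigma&Z\\Z^\top&\Sigma\end{bmatrix}\succeq0\}$, which is a known lemma. This yields the second LMI together with the trace constraint.

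\textbf{Step (d): attainment and recovery of the estimator.} The feasible set is closed. It is also bounded, because the trace constraint bounds $\Tr\Sigma_{\epsilon,t}$, and the remaining variables are then bounded by the LMIs. Hence the optimum is attained. Plugging the optimizer into the saddle point gives the least-favorable law $\Gauss(\hat\epsilon_t,\Sigma^*_{\epsilon,t})$ and the estimator $\xprior_t+K_t^*\nu_t$ with $K_t^*=T_t^*(S_t^*)^{-1}$.
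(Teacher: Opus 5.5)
Your route is the paper's: a lower bound from Gaussian laws with nominal mean, where the conditional mean is affine and the value is $\Tr(\Xprior-T_tS_t^{-1}T_t^\top)$. This is matched by an upper bound from affine estimators over the Gelbrich mean--covariance set, and the SDP follows from the Schur complement and the Bures LMI. Your use of Sion's theorem on the compact convex Bures ball, and your compactness proof of attainment, make explicit what the paper imports from \cite{nguyen2023bridging} or leaves implicit.

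Your ``one checks that mean perturbations are dominated'' is true and has a one-line proof. For the estimator centered at the nominal mean, the risk at $(\hat\epsilon_t+\delta,\Sigma)$ equals the risk at $(\hat\epsilon_t,\Sigma+\delta\delta^\top)$. Moreover $\Bures^2(\Sigma+\delta\delta^\top,\hat\Sigma_{\epsilon,t})\le\|\delta\|^2+\Bures^2(\Sigma,\hat\Sigma_{\epsilon,t})$, by Gelbrich applied to a coupling with an independent random sign on $\delta$. This step must come before Sion, since the risk is convex, not concave, in the mean.

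The missing step is the constraint $\Sigma_{\epsilon,t}\succeq\lambda_{\min}(\hat\Sigma_{\epsilon,t})I$. It is not part of~\eqref{eq:stagewise-minimax}: there the adversary ranges over the whole ball. Adding it restricts the adversary, so a priori it could push the SDP value below $J_t$. You treat it as given and say it ``guarantees $S_t\succ0$, which is needed for the inner minimizer to exist.'' That is not its role. Joint positive semidefiniteness gives $\operatorname{range}(T_t^\top)\subseteq\operatorname{range}(S_t)$, so $K=T_tS_t^{\dagger}$ always minimizes, and Sion needs no invertibility. What your sandwich actually proves is
\[
J_t=\max\Bigl\{\Tr\bigl(\Xprior-T_tS_t^{\dagger}T_t^\top\bigr)\ :\ \Bures(\Sigma_{\epsilon,t},\hat\Sigma_{\epsilon,t})\le\theta_t\Bigr\}
\]
over the unrestricted ball. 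To reach the stated SDP, and to make $K_t^*=T_t^*(S_t^*)^{-1}$ well defined, you must show that imposing $\Sigma_{\epsilon,t}\succeq\lambda_{\min}(\hat\Sigma_{\epsilon,t})I$ does not lower this maximum. The paper gets this by citing \cite{jang2025distributionally,nguyen2023bridging}.

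It holds for the following reason. The objective equals $\min_K\Tr\bigl((I,-K)\,M\,\mathrm{blkdiag}(\Xpostbf,\Sigma_{\epsilon,t})\,M^\top(I,-K)^\top\bigr)$, where $M$ maps $(e_{t-1},\epsilon_t-\hat\epsilon_t)$ linearly to $(e_t^-,\hat\nu_t)$. It is therefore concave and Loewner-nondecreasing in $\Sigma_{\epsilon,t}$, with supergradient $D\succeq0$. Formally, stationarity on the Bures ball with multiplier $\beta>0$ yields a maximizer of the form $(I-D/\beta)^{-1}\hat\Sigma_{\epsilon,t}(I-D/\beta)^{-1}\succeq\lambda_{\min}(\hat\Sigma_{\epsilon,t})I$.

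With this in place your saddle-point step also closes. Any minimizer of $K\mapsto\sup_{\Sigma}R(K,\Sigma)$ is a best response to an SDP-optimal $\Sigma^*_{\epsilon,t}$. Because $S_t^*\succ0$, that best response is unique, namely $T_t^*(S_t^*)^{-1}$.
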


In words, the SDP~\eqref{eq:sdp_t} computes the least-favorable covariance matrices within the ambiguity set, producing a minimax-optimal Kalman-like update. While \cite[Lemma~1]{jang2025distributionally} considers the uncorrelated case,~\Cref{prop:sdp} extends this formulation to allow cross-covariance $\Sigma_{wv,t-1}$, induced by the joint Wasserstein ambiguity set, thereby capturing noise coupling.\footnote{The least-favorable distribution shares the nominal mean $\hat\epsilon_t$, since the quadratic loss depends only on second-order moments and mean deviations are not adversarially beneficial.} The proof adapts the primal--dual sandwich argument of \cite{nguyen2023bridging}; see Appendix~A.

\begin{remark}[Initial stage $t=0$]\label{rem:init_stage}
{
At $t=0$, the stacked noise is $\epsilon_0 = [x_0^\top,v_0^\top]^\top$, and there is no preceding propagation step. Accordingly, the prior covariance $\Xpriorinit$ is treated as a free decision variable rather than being specified by a propagation constraint. Under the local linear-Gaussian surrogate, the DR-MMSE estimation problem~\eqref{eq:stagewise-minimax} at $t=0$ is equivalent to the following finite-dimensional SDP:

\begin{align}
\begin{split}
\max_{\substack{\Xpriorinit,\Xpostinit,\\ \Sigma_{v,0},
\Sigma_{xv,0},Z_0}}
\quad & \Tr(\Xpostinit)\\
\mathrm{s.t.}\quad
&
\begin{bmatrix}
\Xpriorinit - \Xpostinit & T_0 \\
T_0^{\top} & S_0
\end{bmatrix}
\succeq 0,\;
\begin{bmatrix}
\hat\Sigma_{\epsilon,0} & Z_0 \\
Z_0^{\top} & \Sigma_{\epsilon,0}
\end{bmatrix}
\succeq 0,\\
& \Tr\bigl(\Sigma_{\epsilon,0} + \hat\Sigma_{\epsilon,0} - 2Z_0\bigr)\le \theta_0^2,\\
&\Sigma_{\epsilon,0} \succeq \lambda_{\min}(\hat{\Sigma}_{\epsilon,0})I_{n_x+n_y},\\
& \Sigma_{\epsilon,0}
=
\begin{bmatrix}
\Xpriorinit & \Sigma_{xv,0} \\
\Sigma_{xv,0}^{\top} & \Sigma_{v,0}
\end{bmatrix}
\in \psd{n_x+n_y},\\
& \Xpriorinit,\Xpostinit\in \psd{n_x},
\Sigma_{v,0}\in \psd{n_y}, \Sigma_{xv,0}\in \real{n_x\times n_y},
Z_0\in \mathbb{R}^{(n_x+n_y)\times (n_x+n_y)},
\end{split}
\label{eq:sdp_t0}
\end{align}

where
\begin{align*}
T_0 &:= \Xpriorinit C_0^{\top} + \Sigma_{xv,0}, \\
S_0 &:= C_0\Xpriorinit C_0^{\top}
      + \Sigma_{v,0}
      + C_0\Sigma_{xv,0}
      + \Sigma_{xv,0}^{\top}C_0^{\top}. 
\end{align*}
Moreover, the SDP~\eqref{eq:sdp_t0} attains its optimum. If $(\Xpriorinitopt,\Xpostinit^*,\Sigma_{v,0}^*,\Sigma_{xv,0}^*,Z_0^*)$ is optimal for~\eqref{eq:sdp_t0}, with associated blocks $(T_0^*,S_0^*)$, then the least-favorable stacked noise law for the surrogate is $\Gauss(\hat\epsilon_0,\Sigma_{\epsilon,0}^*)$, and the corresponding minimax-optimal estimator $\psi_0^*(Y_0)=\xpost_0$ admits the innovation form
\[
\psi_0^*(Y_0)=\xprior_0+K_0^* \nu_0,
\]
where $K_0^* = T_0^*(S_0^*)^{-1}$ is the robust gain.

}
\end{remark}

Applying~\Cref{prop:sdp} recursively results in a DR-EKF under the local linear-Gaussian surrogate. Building on this foundation, we then incorporate linearization residuals by designing the ambiguity set to capture both the noise model mismatch and nonlinear effects.
% The remaining step is to choose the ambiguity radius $\theta_t$. The next subsections construct a computable $\mathcal{Y}_{t-1}$-measurable radius that captures linearization-induced uncertainty and contains the effective uncertainty law $\mathcal{L}(\tilde\epsilon_t \mid \mathcal{Y}_{t-1})$.

\subsection{Bounding the Linearization Residuals}\label{subsec:residual_bounds}

The gap between the effective stacked uncertainty $\tilde\epsilon_t$ and the nominal noise $\epsilon_t$ is precisely the linearization residual $\Delta_t=\tilde\epsilon_t-\epsilon_t$. To quantify its effect on the ambiguity set, we first derive \emph{oracle} bounds on $\Delta_t$. These bounds are not directly computable, but they make explicit how linearization error enters the ambiguity radius and set up the computable construction in \Cref{subsec:computable_radius}.

\begin{assumption}[Local curvature bounds]\label{assump:local_curv}
There exist constants $L_f,L_h \ge 0$ such that, in a neighborhood of the filter trajectory, $\frac{\partial f}{\partial x}(\cdot,u)$ is Lipschitz in $x$ with constant $L_f$, uniformly over admissible inputs $u$, and $\frac{\partial h}{\partial x}(\cdot)$ is Lipschitz in $x$ with constant $L_h$.
\end{assumption}

\begin{lemma}[Quadratic remainder bound]\label{lem:quad_rem_bound}
Under~\Cref{assump:local_curv}, the Taylor remainder terms satisfy
\[
\|r^f_{t-1}(e)\|
\le
\frac{L_f}{2}\|e\|^2,
\qquad
\|r^h_t(e)\|
\le
\frac{L_h}{2}\|e\|^2,
\]
for all $e$ in the neighborhood specified in~\Cref{assump:local_curv}.
\end{lemma}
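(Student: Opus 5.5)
The plan is to prove the bound with the standard integral form of the Taylor remainder, using the Lipschitz continuity of the Jacobian from \Cref{assump:local_curv}. I will treat $r^f_{t-1}$ in detail; the argument for $r^h_t$ is identical with $\xprior_t$, $C_t$ and $L_h$ in place of $\xpost_{t-1}$, $A_{t-1}$ and $L_f$.

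\textbf{Integral representation.} Fix $e$ in the neighborhood and define $g(s) := f(\xpost_{t-1}+se,u_{t-1})$ for $s\in[0,1]$. Since $f$ is $C^2$ (in particular $C^1$), the fundamental theorem of calculus gives
\[
f(\xpost_{t-1}+e,u_{t-1})-f(\xpost_{t-1},u_{t-1})=\int_0^1 \frac{\partial f}{\partial x}(\xpost_{t-1}+se,u_{t-1})\,e\,ds .
\]
Because $A_{t-1}e=\int_0^1 A_{t-1}e\,ds$, this yields
\[
r^f_{t-1}(e)=\int_0^1\Bigl(\frac{\partial f}{\partial x}(\xpost_{t-1}+se,u_{t-1})-\frac{\partial f}{\partial x}(\xpost_{t-1},u_{t-1})\Bigr)e\,ds.
\]

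\textbf{Lipschitz bound.} Take norms, use $\|Me\|\le\|M\|\,\|e\|$ in the spectral norm, and apply the Lipschitz bound to the Jacobian difference, which is at most $L_f\|se\|=L_f s\|e\|$. Then
\[
\|r^f_{t-1}(e)\|\le \int_0^1 L_f s\|e\|^2\,ds=\frac{L_f}{2}\|e\|^2 .
\]

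\textbf{Main obstacle: staying inside the neighborhood.} The only delicate point is that the Lipschitz property is assumed only in a neighborhood of the filter trajectory. The argument needs the whole segment $\{\xpost_{t-1}+se : s\in[0,1]\}$ to lie in that neighborhood, not just its endpoints. I will handle this by reading ``$e$ in the neighborhood'' as meaning that the neighborhood is taken convex around the linearization point, for instance a ball centered at $\xpost_{t-1}$ (respectively $\xprior_t$). Then $\xpost_{t-1}+e$ in the ball implies every point of the segment is in the ball. If needed, I will state this convexity explicitly in the proof as the interpretation of \Cref{assump:local_curv}.
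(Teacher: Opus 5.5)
Your proof is correct and matches the paper's argument: the integral form of the remainder from the fundamental theorem of calculus, then the Lipschitz bound $sL_f\|e\|$ on the Jacobian difference, integrated to give $\frac{L_f}{2}\|e\|^2$. Your handling of the neighborhood also agrees with the paper, which fixes $e$ so that the whole segment $\{\xpost_{t-1}+se : s\in[0,1]\}$ lies in the neighborhood of \Cref{assump:local_curv}; this is the same condition that your convexity reading guarantees.
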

The result follows from standard Taylor remainder bounds under Lipschitz Jacobians. See Appendix~\ref{app:pf_lem1} for the proof.
\Cref{lem:quad_rem_bound} provides pointwise quadratic bounds on the Taylor remainder terms. To convert these into bounds suitable for Wasserstein radius construction, we impose a local moment-regularity condition ensuring that the fourth moments are controlled by second moments.

\begin{assumption}[Fourth-moment bound]\label{ass:fourth_moment}
There exist constants $\alpha_f,\alpha_h \ge 0$ such that, for all $t\ge 0$, 
\[
(\mathbb E[\|e_t^-\|^4])^{1/2}\le \alpha_h \mathbb E[\|e_t^-\|^2],
\] 
and, for all $t\ge 1$, \[(\mathbb E[\|e_{t-1}\|^4])^{1/2}\le \alpha_f \mathbb E[\|e_{t-1}\|^2].
\]
\end{assumption}

This assumption imposes a local uniform kurtosis bound along the filter trajectory. Combined with \Cref{lem:quad_rem_bound}, it converts the quadratic remainder bounds into second-moment bounds compatible with the Wasserstein ambiguity set.\footnote{More generally, bounded kurtosis suffices. Under standard EKF stability conditions~\cite[Th.~3.1]{reif1999stochastic}, such bounds hold locally with finite constants. In the experiments, we set $\alpha_f=\alpha_h=\sqrt{3}$.}

Combining \Cref{lem:quad_rem_bound} with \Cref{ass:fourth_moment} gives the second-moment bounds on the residual magnitudes.
\begin{lemma}[Oracle residual radii]\label{lem:oracle_residual_radii}
Under \Cref{assump:local_curv,ass:fourth_moment}, define
\[
\eta^f_{t-1}
\coloneqq
\frac{L_f}{2}\alpha_f \mathbb{E}[\|e_{t-1}\|^2],\]
for $t\ge 1$ and \[
\eta^{h}_t
\coloneqq
\frac{L_h}{2}\alpha_h \mathbb{E}[\|e_{t}^{-}\|^2],\] 
for $t\ge 0$.
Then
\begin{equation}\label{eq:oracle_rad}
\begin{aligned}
\bigl(\mathbb{E}[\|r^f_{t-1}(e_{t-1})\|^2]\bigr)^{1/2}
&\le
\eta^f_{t-1},
 && t\ge 1,
\\
\bigl(\mathbb{E}[\|r^h_t(e_t^{-})\|^2]\bigr)^{1/2}
&\le
\eta^{h}_t,
 && t\ge 0.
\end{aligned}
\end{equation}
\end{lemma}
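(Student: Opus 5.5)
The plan is to combine the pointwise quadratic bound of \Cref{lem:quad_rem_bound} with the fourth-moment regularity of \Cref{ass:fourth_moment}, one residual at a time. The $f$-residual ($t\ge 1$) and the $h$-residual ($t\ge 0$) are handled identically, with $(L_f,\alpha_f,e_{t-1})$ replaced by $(L_h,\alpha_h,e_t^-)$, so I will write out only the $f$ case.

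\emph{Step 1 (pointwise bound).} Assume the error $e_{t-1}$ takes values in the neighborhood of \Cref{assump:local_curv} almost surely. Then \Cref{lem:quad_rem_bound} gives, pathwise,
\[
\|r^f_{t-1}(e_{t-1})\|^2 \le \tfrac{L_f^2}{4}\|e_{t-1}\|^4 .
\]

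\emph{Step 2 (take expectations and use the kurtosis bound).} By monotonicity of expectation,
\[
\mathbb E\bigl[\|r^f_{t-1}(e_{t-1})\|^2\bigr]\le \tfrac{L_f^2}{4}\,\mathbb E\bigl[\|e_{t-1}\|^4\bigr].
\]
Taking square roots yields
\[
\bigl(\mathbb E[\|r^f_{t-1}(e_{t-1})\|^2]\bigr)^{1/2}\le \tfrac{L_f}{2}\bigl(\mathbb E[\|e_{t-1}\|^4]\bigr)^{1/2}.
\]
\Cref{ass:fourth_moment} bounds the right-hand side by $\tfrac{L_f}{2}\alpha_f\,\mathbb E[\|e_{t-1}\|^2]=\eta^f_{t-1}$. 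Repeating the argument with $e_t^-$ and $r^h_t$ gives the bound by $\eta^h_t$ for all $t\ge 0$. Finiteness of the fourth moments, which is implicit in \Cref{ass:fourth_moment}, guarantees that every quantity above is finite.

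\emph{Main obstacle.} The only delicate point is the localization. \Cref{lem:quad_rem_bound} holds only for $e$ in the neighborhood, whereas the expectation ranges over the whole support of the error. I would resolve this by reading \Cref{assump:local_curv} together with \Cref{ass:fourth_moment} as a standing local-validity condition, namely that the errors remain in the region where the curvature bound holds. If one instead wants the Lipschitz bound to hold globally, the pointwise estimate is valid everywhere and no further argument is needed. I would state this interpretation explicitly in the proof rather than attempt a truncation argument.
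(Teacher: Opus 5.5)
Your proposal is correct and follows essentially the same route as the paper: square the pointwise bound of \Cref{lem:quad_rem_bound}, take expectations and square roots, then invoke \Cref{ass:fourth_moment}, treating the $h$-residual identically for all $t\ge 0$. Your explicit caveat that the errors must remain in the neighborhood of \Cref{assump:local_curv} matches the local-validity reading that the paper leaves implicit (cf.\ the remark following \Cref{thm:dr_certificate}).
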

Proof of~\Cref{lem:oracle_residual_radii} is given in Appendix~\ref{app:pf_lem2}.
These bounds quantify how residuals enter the effective uncertainty through second moments. However, both $\eta^f_{t-1}$ and $\eta^h_t$ depend on the unknown error moments and are therefore not directly computable, leading to a circular dependence.

\subsection{Computable Effective Radius and Recursive Distributionally Robust Certificate}
\label{subsec:computable_radius}

We construct a computable ambiguity radius that accounts for both nominal noise uncertainty and linearization effects, together with deterministic recursive bounds on the prior and posterior second moments of the true nonlinear estimation error. The key challenge is the circular dependence introduced above: \emph{the residual bounds depend on the estimation error, while the estimation error depends on the ambiguity radius.} 

To address this, we develop a recursive certificate construction based on deterministic error bounds. The approach proceeds in three steps: $(i)$ build a deterministic posterior MSE envelope for the linear-Gaussian surrogate, $(ii)$ show that any deterministic posterior error bound induces a computable effective radius together with a prior MSE bound, and $(iii)$ combine these with the mismatch recursion to certify the true posterior MSE.

We begin by formalizing the uncertainty in the nominal noise model and introducing deterministic bounds on the linearized dynamics and filter gains.

\begin{assumption}[Nominal ambiguity radius]
\label{ass:nominal_radius}
For each $t \ge 0$, the nominal radius $\theta_{\epsilon,t} \ge 0$ satisfies
\[
\mathcal{L}(\epsilon_t)
\in \ambset_{\epsilon,t}(\theta_{\epsilon,t}).
\]
\end{assumption}

\begin{assumption}[Deterministic envelopes]\label{ass:det_env}
There exist deterministic sequences $\{\bar a_t,\bar m_t,\bar k_t\}_{t\ge0}$ such that, for every $t\ge0$, $\|A_t\|\le \bar a_t$, $\|I-K_t^*C_t\|\le \bar m_t$, and $\|K_t^*\|\le \bar k_t$ almost surely.\footnote{On a bounded operating region, \Cref{assump:local_curv} yields deterministic bounds on $\|A_t\|$ and $\|C_t\|$. Moreover, the SDP constraints together with the spectral tube bounds~\cite[Proposition~1]{jang2025distributionally} imply upper and lower spectral bounds on $\Xprior$ and $S_t^*$, hence also a bound on $\|K_t^*\|=\|T_t^*(S_t^*)^{-1}\|$. One may then take $\bar m_t := 1+\bar k_t\bar c_t$ when $\|C_t\|\le \bar c_t$.}
\end{assumption}
These assumptions ensure that the true noise law lies within the ambiguity set and that the filter dynamics are uniformly bounded, enabling recursive error propagation.

\subsubsection{Oracle effective radius}

For each $t \ge 0$, we define the oracle effective radius as
\begin{equation}
\eta_{\epsilon,t}
\coloneqq
\sqrt{(\eta^f_{t-1})^2 + (\eta^{h}_t)^2},
\qquad
\theta_{\epsilon,t}^{\mathrm{eff}}
\coloneqq
\theta_{\epsilon,t} + \eta_{\epsilon,t},
\label{eq:effective-radius-oracle}
\end{equation}
where $\eta_{t-1}^f$ and $\eta_t^h$ are defined in~\eqref{eq:oracle_rad}, with $\eta_{-1}^f \coloneqq 0$.
This radius enlarges the nominal ambiguity set just enough to absorb linearization residuals, so that the effective noise $\tilde\epsilon_t$ lies in $\ambset_{\epsilon,t}(\theta_{\epsilon,t}^{\mathrm{eff}})$. However, $\eta_{\epsilon,t}$ depends on the unknown error moments and is therefore not directly computable.

\subsubsection{Surrogate posterior MSE envelope}

We first construct a deterministic posterior MSE envelope for the surrogate system, independent of the residual terms. To this end, we consider the linear-surrogate error dynamics:
\begin{align}
\hat e_t^- &:= A_{t-1}\hat e_{t-1} + (w_{t-1}-\hat w_{t-1}),\quad t\ge1\label{eqn:ehat_pred}\\
\hat e_t &:= (I-K_t^*C_t)\hat e_t^- - K_t^*(v_t-\hat v_t), \quad t\ge0\label{eqn:ehat},
\end{align}
with $\hat e_0^- := x_0 - \xnom$.
Because $\hat e_t$ evolves under the same gains as the true error but without nonlinear remainder terms, it serves as a tractable proxy for the true estimation error.

\begin{lemma}[Surrogate posterior MSE envelope]\label{lem:surrogate_posterior}
Suppose \Cref{ass:nominal_radius,ass:det_env} hold. Define $\bar s_0 := \bar m_0 \sqrt{\Tr(\Xnom)}+\bar k_0 \sqrt{\Tr(\hat{\Sigma}_{v,0})} + (\bar m_0 + \bar k_0)\theta_{\epsilon,0}$ and for $t \ge 1$,
\begin{equation}\label{eqn:s_t_bar}
\begin{split}
\bar s_t
\coloneqq
&\bar m_t\left(\bar a_{t-1}\bar s_{t-1}+\sqrt{\Tr(\hat{\Sigma}_{w,t-1})}\right) +
\bar k_t \sqrt{\Tr(\hat{\Sigma}_{v,t})} + (\bar m_t + \bar k_t)\theta_{\epsilon,t}.
\end{split}
\end{equation}
Then $\mathbb E[\|\hat e_t\|^2]\le \bar s_t^2$ for all $t\ge 0$.
\end{lemma}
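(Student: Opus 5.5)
We will prove the bound by induction on $t$, working throughout with the $L^2$ norm $\|z\|_{L^2}:=(\mathbb E[\|z\|^2])^{1/2}$ on random vectors, so that the claim reads $\|\hat e_t\|_{L^2}\le \bar s_t$. Minkowski's inequality holds for this norm: $\|z_1+z_2\|_{L^2}\le \|z_1\|_{L^2}+\|z_2\|_{L^2}$. For any random matrix $M$ with $\|M\|\le \bar c$ almost surely we also have $\|Mz\|_{L^2}\le \bar c\,\|z\|_{L^2}$. The latter applies to $A_{t-1}$, $I-K_t^*C_t$ and $K_t^*$ by \Cref{ass:det_env}. We therefore never need independence between the gains and the noise, which is convenient since $K_t^*$ is $\mathcal Y_{t-1}$-measurable and data dependent.

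The one noise-specific ingredient is a bound on centered noise in $L^2$ via the Wasserstein radius. By \Cref{ass:nominal_radius}, $W_2(\mathcal L(\epsilon_t),\Qhat_{\epsilon,t})\le\theta_{\epsilon,t}$. Take an optimal coupling $(\epsilon_t,\hat\epsilon)$ with $\hat\epsilon\sim\Gauss(\hat\epsilon_t,\hat\Sigma_{\epsilon,t})$. Then
\[
\|\epsilon_t-\hat\epsilon_t\|_{L^2}\le \|\epsilon_t-\hat\epsilon\|_{L^2}+\|\hat\epsilon-\hat\epsilon_t\|_{L^2}\le \theta_{\epsilon,t}+\sqrt{\Tr(\hat\Sigma_{\epsilon,t})}.
\]
Coordinate projections are 1-Lipschitz, so the same argument applies blockwise. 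For $t\ge1$ it gives $\|w_{t-1}-\hat w_{t-1}\|_{L^2}\le \sqrt{\Tr(\hat\Sigma_{w,t-1})}+\theta_{\epsilon,t}$ and $\|v_t-\hat v_t\|_{L^2}\le\sqrt{\Tr(\hat\Sigma_{v,t})}+\theta_{\epsilon,t}$. For $t=0$ it gives $\|x_0-\xnom\|_{L^2}\le\sqrt{\Tr(\Xnom)}+\theta_{\epsilon,0}$, together with the analogous bound for $v_0$. Here we use that the marginal projections of the optimal coupling are couplings of the marginals, each with cost at most $\theta_{\epsilon,t}^2$.

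For the base case, apply Minkowski to \eqref{eqn:ehat} at $t=0$ with $\hat e_0^-=x_0-\xnom$:
\[
\|\hat e_0\|_{L^2}\le \bar m_0\bigl(\sqrt{\Tr(\Xnom)}+\theta_{\epsilon,0}\bigr)+\bar k_0\bigl(\sqrt{\Tr(\hat\Sigma_{v,0})}+\theta_{\epsilon,0}\bigr)=\bar s_0.
\]

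For the inductive step with $t\ge1$, first bound the prior error from \eqref{eqn:ehat_pred} and the induction hypothesis:
\[
\|\hat e_t^-\|_{L^2}\le \bar a_{t-1}\bar s_{t-1}+\sqrt{\Tr(\hat\Sigma_{w,t-1})}+\theta_{\epsilon,t}.
\]
Then \eqref{eqn:ehat} gives
\[
\|\hat e_t\|_{L^2}\le \bar m_t\bigl(\bar a_{t-1}\bar s_{t-1}+\sqrt{\Tr(\hat\Sigma_{w,t-1})}\bigr)+\bar k_t\sqrt{\Tr(\hat\Sigma_{v,t})}+(\bar m_t+\bar k_t)\theta_{\epsilon,t},
\]
which is exactly \eqref{eqn:s_t_bar}. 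Squaring both sides yields $\mathbb E[\|\hat e_t\|^2]\le\bar s_t^2$.

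The main point requiring care is the Wasserstein-to-$L^2$ step. The issue is that the nominal mean used for centering, $\hat w_{t-1}$ or $\hat v_t$, may differ from the true mean. The triangle-inequality argument through the coupling handles this automatically, since it bounds the deviation about the nominal mean, not the variance. No independence assumption is needed anywhere, because Minkowski's inequality is used rather than an orthogonal decomposition.
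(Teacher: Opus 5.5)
Your proof is correct and follows the paper's argument: induction on $t$, Minkowski's inequality in $L^2$, and the almost-sure norm envelopes of \Cref{ass:det_env} applied to \eqref{eqn:ehat_pred}--\eqref{eqn:ehat}, with the same blockwise noise bounds as inputs. The one difference is how you get $\|w_{t-1}-\hat w_{t-1}\|_{L^2}\le\sqrt{\Tr(\hat\Sigma_{w,t-1})}+\theta_{\epsilon,t}$ (and its analogues): you derive it directly from an optimal coupling and the $L^2$ triangle inequality, while the paper cites the Gelbrich inequality together with the trace-tube bound of \cite{jang2025distributionally}; your route is self-contained and does not need the nominal law to be Gaussian.
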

Proof of~\Cref{lem:surrogate_posterior} is given in Appendix~\ref{app:pf_lem_surrogate}.
The quantity $\bar s_t$ provides a deterministic posterior MSE envelope for the surrogate system and will serve as the basis for constructing a computable bound on the true estimation error.

\subsubsection{Computable prior bound and effective radius}

We now use the deterministic posterior bound to construct a computable effective ambiguity radius. The key observation is that the oracle residual bounds can be upper-bounded using a deterministic posterior error bound.

\begin{lemma}[Computable one-step radius bound]
\label{lem:onepass_upper_bound}
Fix $t \ge 0$ and suppose~\Cref{assump:local_curv,ass:fourth_moment,ass:nominal_radius,ass:det_env} hold.
For $t \ge 1$, let $\bar V_{t-1} \in \mathbb{R}$ satisfy $\mathbb E[\|e_{t-1}\|^2] \le \bar V_{t-1}^2$.
Define $\bar\eta^f_{t-1} := \frac{L_f}{2}\alpha_f \bar V_{t-1}^2$
with $\bar\eta^f_{-1}:=0$, and let
\begin{equation}
\gamma_t
:=
\bar a_{t-1}\bar V_{t-1}
+\sqrt{\Tr(\hat{\Sigma}_{w,t-1})} + \theta_{\epsilon,t}
+\bar\eta^f_{t-1}, \; t \ge 1.
\label{eq:gamma}
\end{equation}
with $\gamma_0 := \sqrt{\Tr(\Xnom)} + \theta_{\epsilon,0}$.
Set $\bar\eta^h_t := \frac{L_h}{2}\alpha_h \gamma_t^2$.
Then $\eta^f_{t-1}\le \bar\eta^f_{t-1}$, $\mathbb E[\|e_t^-\|^2]\le \gamma_t^2$, $\eta^h_t\le \bar\eta^h_t$,
and therefore
\begin{equation}
\theta^{\mathrm{eff}}_{\epsilon,t}
\le
\bar\theta^{\mathrm{eff}}_{\epsilon,t}
:=
\theta_{\epsilon,t}
+
\sqrt{(\bar\eta^f_{t-1})^2+(\bar\eta^h_t)^2}.
\label{eq:effective-radius}
\end{equation}
\end{lemma}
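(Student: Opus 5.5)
We prove the four claims in order: the bound on $\eta^f_{t-1}$, the prior MSE bound $\gamma_t^2$, the bound on $\eta^h_t$, and finally the radius inequality~\eqref{eq:effective-radius}. Throughout we use $L^2$-norms $\|z\|_{L^2}:=(\mathbb E[\|z\|^2])^{1/2}$ and Minkowski's inequality.

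\textbf{Step 1: bound on $\eta^f_{t-1}$.} For $t\ge1$ the hypothesis $\mathbb E[\|e_{t-1}\|^2]\le \bar V_{t-1}^2$ gives, by the definition in \Cref{lem:oracle_residual_radii},
\[
\eta^f_{t-1}=\tfrac{L_f}{2}\alpha_f\,\mathbb E[\|e_{t-1}\|^2]\le \tfrac{L_f}{2}\alpha_f\bar V_{t-1}^2=\bar\eta^f_{t-1}.
\]
For $t=0$ both sides are zero by convention.

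\textbf{Step 2: prior MSE bound.} Take $L^2$-norms in the exact prior error dynamics~\eqref{eq:error-dynamics-prior} and apply Minkowski:
\[
\|e_t^-\|_{L^2}\le \|A_{t-1}e_{t-1}\|_{L^2}+\|w_{t-1}-\hat w_{t-1}\|_{L^2}+\|r^f_{t-1}(e_{t-1})\|_{L^2}.
\]
\begin{itemize}
\item First term: by \Cref{ass:det_env}, $\|A_{t-1}\|\le\bar a_{t-1}$ almost surely, so it is at most $\bar a_{t-1}\bar V_{t-1}$.
\item Third term: by \Cref{lem:oracle_residual_radii} and Step 1, it is at most $\eta^f_{t-1}\le\bar\eta^f_{t-1}$.
\item Second term (the only delicate one): \Cref{ass:nominal_radius} gives $W_2(\mathcal L(\epsilon_t),\Qhat_{\epsilon,t})\le\theta_{\epsilon,t}$. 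Take an optimal coupling of $\epsilon_t$ with $\hat\xi\sim\Qhat_{\epsilon,t}$ and project onto the $w$-block; projection is $1$-Lipschitz. This yields a coupling of $w_{t-1}$ with $\hat w\sim\Gauss(\hat w_{t-1},\hat\Sigma_{w,t-1})$ satisfying $\|w_{t-1}-\hat w\|_{L^2}\le\theta_{\epsilon,t}$. Then Minkowski gives
\[
\|w_{t-1}-\hat w_{t-1}\|_{L^2}\le\|\hat w-\hat w_{t-1}\|_{L^2}+\theta_{\epsilon,t}=\sqrt{\Tr(\hat\Sigma_{w,t-1})}+\theta_{\epsilon,t}.
\]
\end{itemize}
Summing the three bounds gives $\|e_t^-\|_{L^2}\le\gamma_t$. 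This is the same device presumably used in \Cref{lem:surrogate_posterior}. For $t=0$, $e_0^-=x_0-\xnom$, and the identical projection argument on the $x$-block of $\epsilon_0$ gives $\|e_0^-\|_{L^2}\le\sqrt{\Tr(\Xnom)}+\theta_{\epsilon,0}=\gamma_0$.

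\textbf{Step 3: bound on $\eta^h_t$.} Step 2 gives $\mathbb E[\|e_t^-\|^2]\le\gamma_t^2$, so by the definition in \Cref{lem:oracle_residual_radii},
\[
\eta^h_t=\tfrac{L_h}{2}\alpha_h\,\mathbb E[\|e_t^-\|^2]\le\tfrac{L_h}{2}\alpha_h\gamma_t^2=\bar\eta^h_t.
\]

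\textbf{Step 4: radius inequality.} The map $(a,b)\mapsto\sqrt{a^2+b^2}$ is monotone on nonnegative arguments. Inserting Steps 1 and 3 into the definition~\eqref{eq:effective-radius-oracle} therefore yields~\eqref{eq:effective-radius}.

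The main obstacle is Step 2. One must check that the remainder bound of \Cref{lem:quad_rem_bound} is only used in the neighborhood where \Cref{assump:local_curv} holds. One must also carefully justify passing from the joint Wasserstein bound on $\epsilon_t$ to a centered second-moment bound on the $w$-block, via the marginal-projection coupling and Minkowski.
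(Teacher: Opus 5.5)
Your proof is correct and follows the paper's argument step for step: the bound $\eta^f_{t-1}\le\bar\eta^f_{t-1}$ from the hypothesis on $\bar V_{t-1}$, Minkowski on the exact prior error dynamics~\eqref{eq:error-dynamics-prior}, then $\eta^h_t\le\bar\eta^h_t$, and monotonicity in~\eqref{eq:effective-radius-oracle}. The only difference is how you get $\bigl(\mathbb E[\|w_{t-1}-\hat w_{t-1}\|^2]\bigr)^{1/2}\le\sqrt{\Tr(\hat\Sigma_{w,t-1})}+\theta_{\epsilon,t}$ (and its $t=0$ analogue): you project a $W_2$-coupling onto the relevant block and apply Minkowski, whereas the paper cites the Gelbrich inequality and the trace-tube bound, and your route gives the same constant while being self-contained.
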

Proof of~\Cref{lem:onepass_upper_bound} is given in Appendix~\ref{app:pf_lem4}.
\Cref{lem:onepass_upper_bound} shows that, given a deterministic posterior bound at stage $t-1$, one can compute both a valid ambiguity radius $\bar\theta_{\epsilon,t}^{\mathrm{eff}}$ and a bound on the prior error at stage $t$.

\subsubsection{Certificate recursion and main result}

We now combine the two lemmas to obtain a recursive certificate. The true error decomposes as $e_t = \hat{e}_t + \delta_t$, where $\delta_t := e_t - \hat e_t$ captures the mismatch due to Taylor residuals. Let $\delta_t^- := e_t^- - \hat e_t^-$ denote the corresponding prior mismatch.
Initialize $\rho_0^- := 0$ and, for each $t \ge 0$, set
\begin{equation}\label{eq:cert_recursion}
\rho_t=\bar m_t\rho_t^-+\bar k_t\bar\eta_t^h, \;\; \rho_{t+1}^-=\bar a_t\rho_t+\bar\eta_t^f, \;\; \bar V_t=\bar s_t+\rho_t,
\end{equation}
where $\bar\eta_t^f:=\frac{L_f}{2}\alpha_f\bar V_t^2$.
The next theorem establishes that $\bar V_t^2$ provides an upper bound on the true posterior MSE.

\begin{theorem}[Distributionally robust true MSE certificate]
\label{thm:dr_certificate}
Suppose \Cref{assump:local_curv,ass:fourth_moment,ass:nominal_radius,ass:det_env} hold, and let
$\{\bar\theta_{\epsilon,t}^{\mathrm{eff}}, \gamma_t, \bar s_t, \rho_t^-, \rho_t, \bar V_t\}_{t\ge 0}$
be constructed according to~\eqref{eqn:s_t_bar}--\eqref{eq:cert_recursion}. Then, for every $t \ge 0$:
\begin{enumerate}
\item[(i)] \emph{(Marginal feasibility)}
The true effective uncertainty law satisfies $\mathcal L(\tilde\epsilon_t) \in \ambset_{\epsilon,t}(\bar\theta_{\epsilon,t}^{\mathrm{eff}})$.

\item[(ii)] \emph{(Prior MSE)}
The true prior error satisfies ${\mathbb E[\|e_t^-\|^2] \le \gamma_t^2}$.

\item[(iii)] \emph{(Posterior MSE)}
The true posterior error satisfies $\mathbb E[\|e_t\|^2] \le \bar V_t^2$.
\end{enumerate}
\end{theorem}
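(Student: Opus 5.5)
We argue by strong induction on $t$, carrying the hypothesis $\mathbb E[\|e_{t-1}\|^2]\le \bar V_{t-1}^2$ (vacuous at $t=0$) together with the $L^2$ mismatch bounds $(\mathbb E\|\delta_t^-\|^2)^{1/2}\le\rho_t^-$ and $(\mathbb E\|\delta_t\|^2)^{1/2}\le\rho_t$. Given the inductive hypothesis at stage $t-1$, \Cref{lem:onepass_upper_bound} applies with $\bar V_{t-1}$. It yields directly (ii), $\mathbb E[\|e_t^-\|^2]\le\gamma_t^2$, together with $\eta^f_{t-1}\le\bar\eta^f_{t-1}$, $\eta^h_t\le\bar\eta^h_t$, and $\theta^{\mathrm{eff}}_{\epsilon,t}\le\bar\theta^{\mathrm{eff}}_{\epsilon,t}$. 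At $t=0$ the same lemma with $\gamma_0$ gives the base case, since $e_0^-=x_0-\xnom$ has $(\mathbb E\|e_0^-\|^2)^{1/2}\le \sqrt{\Tr(\Xnom)}+\theta_{\epsilon,0}$ by \Cref{ass:nominal_radius}.

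For (i), we take the coupling $(\tilde\epsilon_t,\epsilon_t)$ on the common probability space. Then
\[
W_2(\mathcal L(\tilde\epsilon_t),\mathcal L(\epsilon_t))^2\le \mathbb E\|\Delta_t\|^2=\mathbb E\|r^f_{t-1}(e_{t-1})\|^2+\mathbb E\|r^h_t(e_t^-)\|^2 ,
\]
because $\Delta_t$ stacks the two residual blocks. By \Cref{lem:oracle_residual_radii} this is at most $(\eta^f_{t-1})^2+(\eta^h_t)^2=\eta_{\epsilon,t}^2$. The triangle inequality for $W_2$ combined with \Cref{ass:nominal_radius} then gives $W_2(\mathcal L(\tilde\epsilon_t),\Qhat_{\epsilon,t})\le\theta_{\epsilon,t}+\eta_{\epsilon,t}\le\bar\theta^{\mathrm{eff}}_{\epsilon,t}$, which is (i).

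For (iii), we subtract the surrogate recursions \eqref{eqn:ehat_pred}--\eqref{eqn:ehat} from the true ones. Since the true posterior update uses the same gain, $e_t=e_t^--K_t^*\nu_t=(I-K_t^*C_t)e_t^--K_t^*(v_t-\hat v_t)-K_t^*r^h_t(e_t^-)$. This gives
\[
\delta_t^-=A_{t-1}\delta_{t-1}+r^f_{t-1}(e_{t-1}),\qquad \delta_t=(I-K_t^*C_t)\delta_t^- -K_t^*r^h_t(e_t^-),
\]
with $\delta_0^-=0$. We then apply Minkowski's inequality in $L^2$, the almost-sure operator bounds of \Cref{ass:det_env} (so $\mathbb E\|M\delta\|^2\le\bar m^2\mathbb E\|\delta\|^2$ even though $M$ is random), and the residual bounds $\eta^f_{t-1}\le\bar\eta^f_{t-1}$ and $\eta^h_t\le\bar\eta^h_t$. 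Together these give $\|\delta_t^-\|_{L^2}\le\bar a_{t-1}\rho_{t-1}+\bar\eta^f_{t-1}=\rho_t^-$ and $\|\delta_t\|_{L^2}\le\bar m_t\rho_t^-+\bar k_t\bar\eta^h_t=\rho_t$. Finally, Minkowski on $e_t=\hat e_t+\delta_t$ combined with \Cref{lem:surrogate_posterior} gives $(\mathbb E\|e_t\|^2)^{1/2}\le\bar s_t+\rho_t=\bar V_t$, which closes the induction.

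The main obstacle is bookkeeping consistency rather than any deep step. The quantity $\bar\eta^f_{t-1}$ used in \eqref{eq:cert_recursion} must coincide with the one in \Cref{lem:onepass_upper_bound}, which it does because both are defined via $\bar V_{t-1}$. The residual bounds must hold on the neighborhood of \Cref{assump:local_curv}, which we take as implicitly assumed along the trajectory. We must also check that \Cref{lem:surrogate_posterior} indeed uses the same gains $K_t^*$, so that the differences of the error recursions isolate exactly the remainder terms.
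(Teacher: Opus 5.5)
Your proposal is correct and follows essentially the same route as the paper. It uses induction carrying $\mathbb E[\|e_{t-1}\|^2]\le\bar V_{t-1}^2$ together with the $L^2$ mismatch bounds $\rho_t^{-},\rho_t$, and invokes \Cref{lem:onepass_upper_bound} for (ii) and for $\theta^{\mathrm{eff}}_{\epsilon,t}\le\bar\theta^{\mathrm{eff}}_{\epsilon,t}$. Part (i) comes from the identity coupling of $(\tilde\epsilon_t,\epsilon_t)$ plus the $W_2$ triangle inequality with \Cref{ass:nominal_radius}, and part (iii) from the mismatch recursion $\delta_t^{-}=A_{t-1}\delta_{t-1}+r^f_{t-1}(e_{t-1})$, $\delta_t=(I-K_t^*C_t)\delta_t^{-}-K_t^*r^h_t(e_t^{-})$ with Minkowski and \Cref{lem:surrogate_posterior}. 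The only cosmetic difference is that you handle $t=0$ uniformly through $\delta_0^{-}=0$ and $\rho_0^{-}=0$, whereas the paper writes out the base case separately with $\delta_0=-K_0^*r_0^h(e_0^{-})$.
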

Proof of~\Cref{thm:dr_certificate} is given in Appendix~\ref{app:pf_thm1}. 

\begin{remark}
The bounds in \Cref{thm:dr_certificate} hold locally along the filter trajectory, i.e., as long as the trajectory remains within the neighborhood where \Cref{assump:local_curv} holds.
\end{remark}

Together, \Cref{lem:surrogate_posterior,lem:onepass_upper_bound} provide the key ingredients: \Cref{lem:surrogate_posterior} establishes a deterministic posterior envelope, while \Cref{lem:onepass_upper_bound} yields a computable ambiguity radius and prior bound. \Cref{thm:dr_certificate} combines these with the mismatch dynamics to obtain a fully recursive certificate for the true nonlinear estimation error. This leads to the practical residual-aware DR-EKF in~\Cref{alg:DREKF_TAC}, summarized as a fully recursive and implementable filtering algorithm that preserves the structure of the classical EKF.

\begin{remark}
In practice, replacing the deterministic bounds with realized quantities such as $\|A_t\|$, $\|I - K_t^*C_t\|$, $\|K_t^*\|$, and $\sqrt{\Tr(\Xpostopt)}$ can yield a less conservative pathwise surrogate of the certificate.
\end{remark}

\begin{algorithm}[t]
\DontPrintSemicolon
\caption{Residual-aware DR-EKF}
\label{alg:DREKF_TAC}
\KwIn{$\hat x_0^-$, $\hat\Sigma_{x,0}^-$, $\{\hat w_t,\hat v_t,\hat\Sigma_{w,t},\hat\Sigma_{v,t},\theta_{\epsilon,t}\}_{t\ge0}$, $\{\bar a_t,\bar m_t,\bar k_t\}_{t\ge0}$, $L_f$, $L_h$, $\alpha_f$, $\alpha_h$}

Initialize $\bar x_0^- \leftarrow \hat x_0^-$, $\rho_0^- \leftarrow 0$, and $\bar\eta_{-1}^f \leftarrow 0$\;

% observe $y_0$, compute $C_0 \leftarrow \frac{\partial h}{\partial x}(\bar x_0^-)$, and run the $t=0$ update using the explicit initialization formulas and the initial-stage variant of~\eqref{eq:sdp_t} to obtain $\bar x_0$, $\Sigma_{x,0}$, $\bar V_0$, $\bar x_1^-$, and $\rho_1^-$\;

\For{$t=0,1,\dots$}{
    Observe $y_t$ and compute $C_t \leftarrow \frac{\partial h}{\partial x}(\bar x_t^-)$\;

    Compute $\gamma_t$ via~\eqref{eq:gamma} and set $\bar\eta_t^h \leftarrow \frac{L_h}{2}\alpha_h\gamma_t^2$\;

    Compute $\bar\theta_{\epsilon,t}^{\mathrm{eff}}$ via~\eqref{eq:effective-radius}\;

    Solve~\eqref{eq:sdp_t} with $\bar\theta_{\epsilon,t}^{\mathrm{eff}}$ (use~\eqref{eq:sdp_t0} for $t=0$) to obtain $\Xpostopt$ and $K_t^*$\;

    Update the posterior mean and covariance:
    $\bar x_t \leftarrow \bar x_t^- + K_t^*(y_t - h(\bar x_t^-) - \hat v_t)$,
    $\Sigma_{x,t} \leftarrow \Xpostopt$\;

    Update the mismatch bound:
    $\rho_t \leftarrow \bar m_t\rho_t^- + \bar k_t\bar\eta_t^h$\;

    Compute $\bar s_t$ via~\eqref{eqn:s_t_bar} and set $\bar V_t \leftarrow \bar s_t + \rho_t$\;

    Compute $A_t \leftarrow \frac{\partial f}{\partial x}(\bar x_t,u_t)$\;
    
    Set $\bar\eta_t^f \leftarrow \frac{L_f}{2}\alpha_f\bar V_t^2$\;

    Predict the prior mean:
    $\bar x_{t+1}^- \leftarrow f(\bar x_t,u_t)+\hat w_t$\;

    Propagate the prior mismatch bound:
    $\rho_{t+1}^- \leftarrow \bar a_t\rho_t + \bar\eta_t^f$\;
}
\end{algorithm}

\section{Experiment Results}\label{sec:exp}
We evaluate the proposed residual-aware DR-EKF on three benchmarks, which are designed to evaluate both estimation accuracy and the practical impact of the proposed uncertainty representation.\footnote{Code and additional results are available at \href{https://github.com/jangminhyuk/ResidualAware_DREKF}{project codebase}.}
The first is coordinated-turn (CT) radar tracking, which tests estimation performance under covariance misspecification and increasing nonlinearity. The second is obstacle-avoiding navigation with an uncertainty-aware model predictive controller (MPC), which evaluates whether the posterior covariance improves downstream safety decisions in a static environment.
The third is crowd-aware navigation on the ETH/UCY dataset with pedestrian trajectory prediction and uncertainty-aware MPC, which evaluates whether the proposed covariance representation improves safety in dynamic environments with moving pedestrians and static obstacles.
For clarity, the key experimental parameters for all three benchmarks are summarized in~\Cref{tab:exp_params}.\footnote{The filter runs fully online, with each predict--update cycle completing within the discretization step $\Delta t$, and thus is real-time deployable. While our experiments use Python with MOSEK solver, the SDP~\eqref{eq:sdp_t} admits efficient first-order solutions (e.g., Frank--Wolfe~\cite[Algorithms 1 and 2]{nguyen2023bridging}) for broader deployment.}

In all three scenarios, the DR-EKF uses the effective radius~\eqref{eq:effective-radius}.
The nominal radius $\theta_{\epsilon,t}$ can be calibrated from data using standard DRO techniques, such as bootstrap confidence sets over the Wasserstein ball~\cite{mohajerin2018data}. In the experiments, it was selected via a validation procedure minimizing time-averaged MSE across Monte Carlo runs. The curvature constants $L_f$ and $L_h$ satisfying~\Cref{assump:local_curv} were derived analytically, using the fact that the state remains within a bounded operating region.

\begin{table}[t!]
\centering
\caption{Summary of experimental parameters.}
\label{tab:exp_params}
\footnotesize
\setlength{\tabcolsep}{2pt}

\begin{tabularx}{\linewidth}{
p{1.6cm} 
>{\raggedright\arraybackslash}X 
>{\raggedright\arraybackslash}X 
>{\raggedright\arraybackslash}X}
\toprule
\textbf{Param.} & \textbf{CT Target Tracking} & \textbf{Safe Navigation} & \textbf{Crowd-Aware Navigation} \\
\midrule

$\mu_{x,0}$ 
& $[0,\,0,\,2,\,0,\,0.3]^\top$ 
& $[0,\,0,\,0]^\top$
& $[2.5,\,6.0,\,0.0]^\top$ \\

$\Xpriorinit$ 
& $\mathrm{diag}(0.04 I_2, 0.25 I_2, 0.0025)$
& $\mathrm{diag}(0.01 I_2, 0.001)$
& $\mathrm{diag}(0.01 I_2, 0.007615)$ \\

$\Sigma_{w,t}$ 
& $\mathrm{diag}(0.0001 I_2, 0.0025 I_2, 0.0004)$ 
& $\mathrm{diag}(0.008 I_2, 0.002)$
& $\mathrm{diag}(0.01 I_2, 0.007615)$ \\

$\Sigma_{v,t}$ 
& $\mathrm{diag}(0.0001, 0.25)$ 
& $\mathrm{diag}(0.04 I_3, 0.03)$
& $\mathrm{diag}(0.0144 I_3, 0.001218)$ \\

$\xnom$
& $\mu_{x,0}$ 
& $[0,\,0,\,0]^\top$
& $[2.5,\,6.0,\,0.0]^\top$ \\

$\Xnom$
& $0.1 \Xpriorinit$
& $\mathrm{diag}(0.01 I_2, 0.001)$
& $\mathrm{diag}(0.005 I_2, 0.002285)$ \\

$\hat{\Sigma}_{w,t}$ 
& $0.1 \Sigma_{w,t}$
& $\mathrm{diag}(0.002 I_2, 0.0005)$
& $\mathrm{diag}(0.005 I_2, 0.002285)$ \\

$\hat{\Sigma}_{v,t}$ 
& $0.1 \Sigma_{v,t}$ 
& $\mathrm{diag}(0.005 I_3, 0.0075)$
& $\mathrm{diag}(0.0072 I_3, 0.000366)$ \\

$\theta_{\epsilon,t}$ 
& $0.001$ 
& $0.25$
& $0.3$ \\

$(L_f,L_h)$ 
& $(0.3,0.2)$ 
& $(0.3,0.5)$
& $(0.5,0.5)$ \\

$(\alpha_f,\alpha_h)$ 
& $(\sqrt{3},\sqrt{3})$ 
& $(\sqrt{3},\sqrt{3})$
& $(\sqrt{3},\sqrt{3})$ \\

\bottomrule
\end{tabularx}
\end{table}

\subsection{Coordinated-turn Target Tracking}\label{sec:target_tracking}

We consider a standard CT target tracking model~\cite{gustafsson1996best,roth2014ekf} with sampling time $\Delta t = 0.2 \, \mathrm{s}$:
\begin{equation}\label{eq:ct_dynamics}
f(x_t)=
\begin{bmatrix}
p_{x,t} + \frac{\sin(\omega_t \Delta t)}{\omega_t} v_{x,t}
- \frac{1-\cos(\omega_t \Delta t)}{\omega_t} v_{y,t} \\
p_{y,t} + \frac{1-\cos(\omega_t \Delta t)}{\omega_t} v_{x,t}
+ \frac{\sin(\omega_t \Delta t)}{\omega_t} v_{y,t} \\
\cos(\omega_t \Delta t) v_{x,t}
- \sin(\omega_t \Delta t) v_{y,t} \\
\sin(\omega_t \Delta t) v_{x,t}
+ \cos(\omega_t \Delta t) v_{y,t} \\
\omega_t
\end{bmatrix},
\end{equation}
where $x_t = [p_t^\top, v_t^\top, \omega_t]^\top$ comprises the planar position $p_t = [p_{x,t}, p_{y,t}]^\top$, planar velocity $v_t = [v_{x,t}, v_{y,t}]^\top$, and turn rate $\omega_t$. The target is observed via range and bearing:
\[
h(x_t)=
\begin{bmatrix}\|p_t\|, & \operatorname{atan2}(p_{y,t},p_{x,t})\end{bmatrix}^\top.
\]
We assume the true noise is zero-mean Gaussian, with nominal and true models differing only in their covariances (see~\Cref{tab:exp_params}). Performance is evaluated over $100$ Monte Carlo runs with horizon $T=50$.

\begin{figure}[t!]
    \centering
    \includegraphics[width=\linewidth]{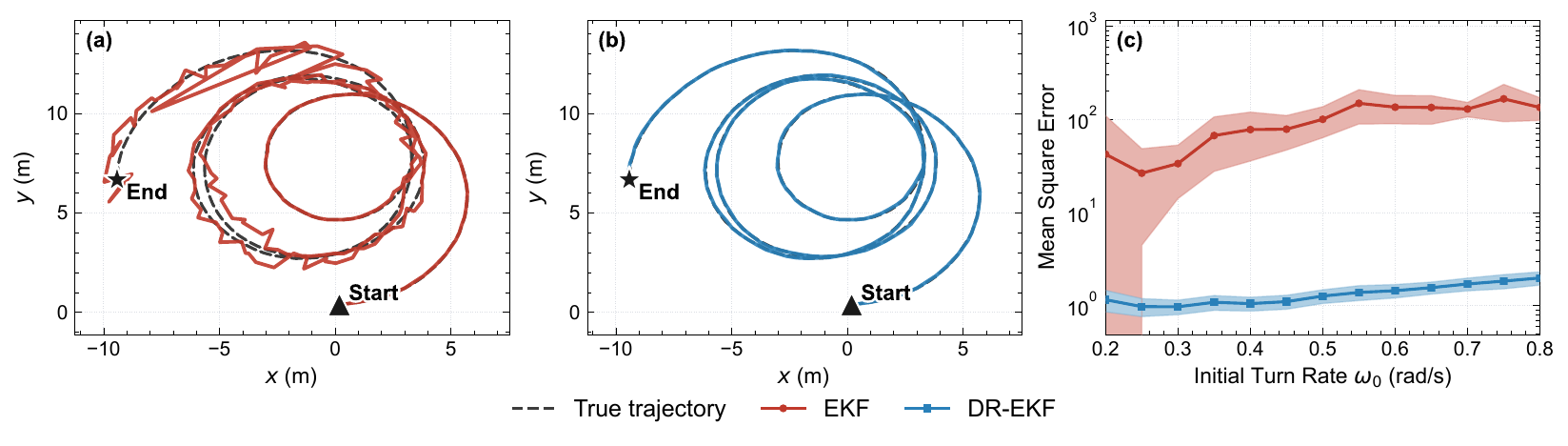}
    \caption{Coordinated-turn experiment under nominal covariance misspecification. (a) EKF with nominal statistics and (b) DR-EKF trajectories for the Monte Carlo run whose EKF MSE is the median over 100 runs. (c) Average MSE versus the initial turn rate $\omega_0$, where increasing $|\omega_0|$ corresponds to stronger nonlinearity; shaded regions denote $0.1$ times the standard deviation over 100 runs.}
    \label{fig:ct_combined}
\end{figure}

\Cref{fig:ct_combined} (a) and (b) show the EKF and DR-EKF state-estimate trajectories, respectively, for the same Monte Carlo run, selected as the one whose EKF MSE is the median among 100 runs. The results show that the EKF trajectory deviates more noticeably from the true trajectory, whereas the DR-EKF tracks it more closely and smoothly.

To examine the effect of nonlinearity, we vary the initial turn rate $\omega_0$ while keeping all covariance matrices and the nominal model fixed. As $|\omega_0|$ increases, the trigonometric couplings in~\eqref{eq:ct_dynamics} amplify linearization error. As shown in \Cref{fig:ct_combined} (c), EKF performance degrades as nonlinearity increases, whereas the DR-EKF maintains substantially lower MSE even at large $|\omega_0|$.
This indicates that the proposed DR-EKF is more robust to accumulated linearization error.

\subsection{Safe Navigation with Uncertainty-Aware MPC}\label{subsec:safe_nav_static}

We evaluate whether the posterior uncertainty of the residual-aware DR-EKF improves closed-loop safety in a goal-reaching task with an obstacle. The controller uses the estimated state together with a covariance-derived adaptive safety margin, linking uncertainty to safety constraints.

The system follows unicycle dynamics with sampling time $\Delta t = 0.2\,\mathrm{s}$, while the measurement model consists of three ultra-wideband range measurements and a heading measurement.
Specifically,
\begin{equation}\label{eqn:uni_model}
f(x_t,u_t)=
\begin{bmatrix}
p_{x,t} + s_t \cos(\psi_t)\Delta t\\
p_{y,t} + s_t \sin(\psi_t)\Delta t\\
\psi_t + \omega_t \Delta t
\end{bmatrix}, \qquad
h(x_t)=
\begin{bmatrix}
\|p_t - b_1\|_2 \\ \|p_t - b_2\|_2 \\ \|p_t - b_3\|_2 \\ \psi_t
\end{bmatrix},
\end{equation}
where $x_t = [p_t^\top,\psi_t]^\top$ is the robot state, with $p_t = [p_{x,t}, p_{y,t}]^\top$ denoting the planar position and $\psi_t$ the heading angle. The control input is $u_t = [s_t,\omega_t]^\top$, comprising the commanded linear and angular velocities, and $b_i \in \mathbb{R}^2$, $i=1,2,3$, are fixed beacon locations. 
We assume the true noise follows a zero-mean Gaussian distribution (see Table~\ref{tab:exp_params}).

\begin{figure}[!t]
    \centering
    \includegraphics[width=0.95\linewidth]{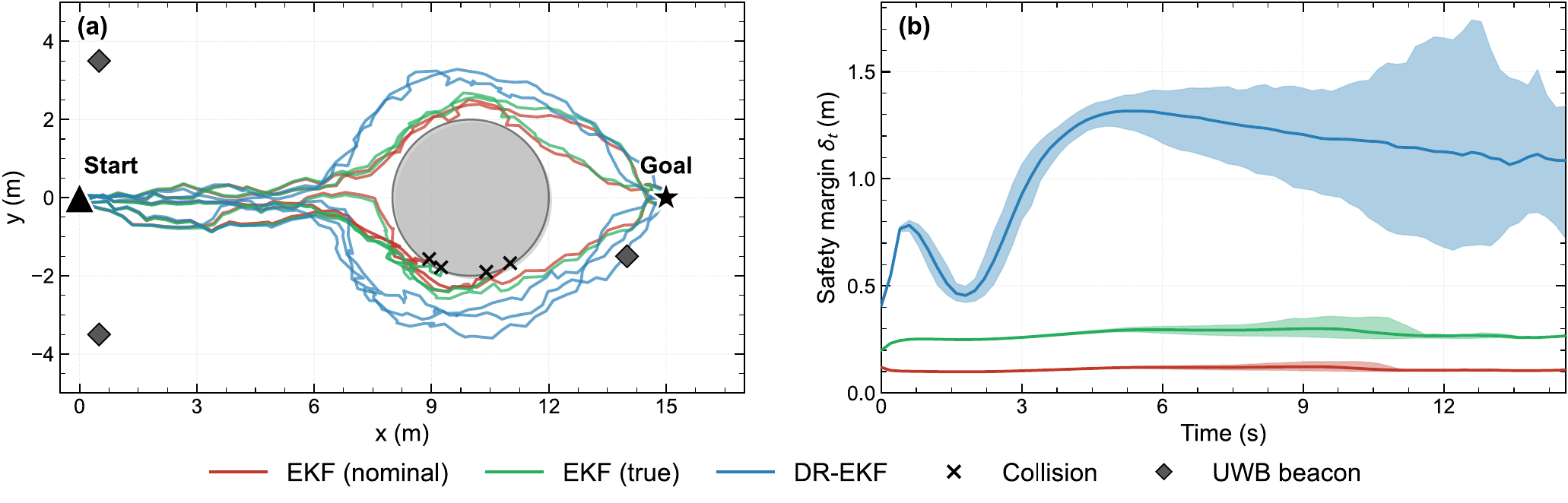}
    \caption{Closed-loop navigation under uncertainty-aware MPC. (a) Representative trajectories for EKF with nominal statistics, EKF with true noise statistics, and the proposed DR-EKF; crosses indicate collisions, and five of the 100 Monte Carlo runs are shown for clarity. (b) Time evolution of the safety margin $\delta_t$; shaded bands denote $\pm1$ empirical standard deviation over 100 runs.}
    \label{fig:safe_nav_combined}
\end{figure}

Starting from an initial state $x_0$, the robot must reach a predefined goal position $p_{\mathrm{goal}}$. At each time step, we solve a receding-horizon MPC problem based on the current state estimate $\xpost_t$. Let $\bar{x}_{t+k|t} = [\bar{p}_{t+k|t}^\top,\,\bar{\psi}_{t+k|t}]^\top$ denote the predicted state at stage $t+k$. The MPC problem is given by
\begin{subequations}
\begin{align}
\min_{\{\bar{x}_{t+k|t}\}_{k=0}^{N},\, \{u_{t+k|t}\}_{k=0}^{N-1}}
\quad &
\sum_{k=0}^{N-1}
\!\left(
q \|\bar{p}_{t+k|t}-p_{\mathrm{goal}}\|_2^2
+ r_s s_{t+k|t}^2
+ r_\omega \omega_{t+k|t}^2
\right)
+ q_f \|\bar{p}_{t+N|t}-p_{\mathrm{goal}}\|_2^2 \\
\mathrm{s.t.}\quad
& \bar{x}_{t|t} = \xpost_t,  \\
& \bar{x}_{t+k+1|t} = f(\bar{x}_{t+k|t}, u_{t+k|t}),  \qquad k=0,\dots, N-1,\\
& 0 \le s_{t+k|t} \le s_{\max}, \qquad
|\omega_{t+k|t}| \le \omega_{\max}, \quad k=0,\dots,N-1,  \\
& \|\bar{p}_{t+k|t} - p_{\mathrm{obs}}\|_2 \ge r_{\mathrm{obs}} + \delta_t,\label{eq:safety_constraint}
\quad k=1,\dots,N,
\end{align}
\label{eq:mpc_static}
\end{subequations}
where $q=5$, $r_s=0.5$, $r_\omega=0.5$, and $q_f=20$ are the cost weights, and the input constraints are $s_{\max}=1.5\,\mathrm{m/s}$ and $\omega_{\max}=2.0\,\mathrm{rad/s}$.

At each time step, the MPC enforces the safety constraint~\eqref{eq:safety_constraint}, where the safety margin is defined as 
\[
\delta_t := \kappa_\sigma \sqrt{\Tr(\Sigma_{p,t})},
\] 
with $\Sigma_{p,t}$ denoting the position block of $\Xpost$ and $\kappa_\sigma = 1.645$. Thus, the obstacle is adaptively inflated according to the current state-estimation uncertainty. Unlike fixed heuristic margins, this construction directly leverages the posterior covariance from the residual-aware DR-EKF, enabling the safety margin to adjust online to both distributional mismatch and linearization residuals.

We compare the EKF with nominal statistics, the EKF with true noise statistics, and the proposed DR-EKF. Over $100$ Monte Carlo runs, the collision rates are $77\%$, $43\%$, and $1\%$, respectively. As shown in \Cref{fig:safe_nav_combined}, the DR-EKF maintains greater obstacle clearance through a more adaptive safety margin, particularly near the obstacle, substantially reducing collisions and demonstrating the practical impact of the proposed uncertainty representation.

%%%%%%%%%%%%%%%%%%%%%%%%%%%%%%%%%%%%%%%%%%%%%%%%%%%%%%%%%%%%%%%%%%%%%%%%%%%%%%%%

\subsection{Safe Navigation in Crowd Scenes}

\begin{figure}[!t]
    \centering
    \includegraphics[width=0.95\linewidth]{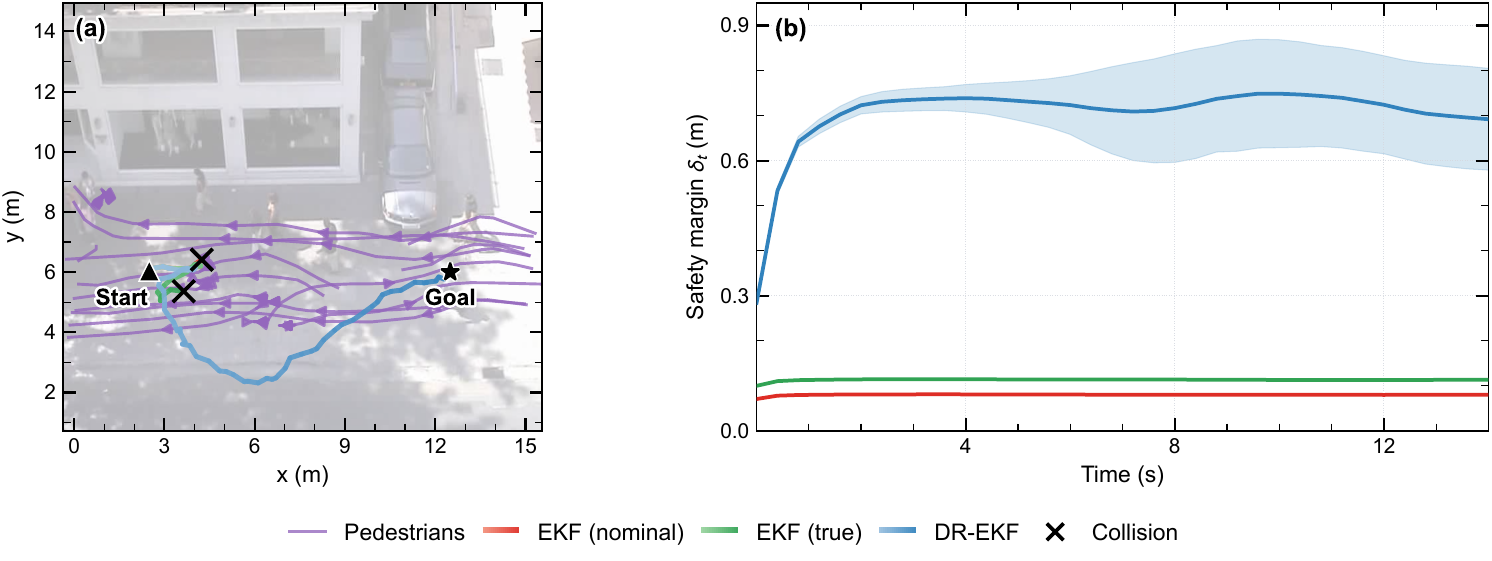}
    \caption{Closed-loop navigation under uncertainty-aware MPC with pedestrian prediction on the ZARA02 scene from the ETH/UCY dataset. (a) Representative trajectories for EKF with nominal statistics, EKF with true noise statistics, and the proposed DR-EKF; crosses mark collisions. (b) Time evolution of the safety margin $\delta_t$; shaded bands indicate $\pm1$ empirical standard deviation across 50 runs.}
    \label{fig:safe_nav_ped}
\end{figure}

We evaluate the DR-EKF in a crowd navigation setting where a mobile robot must reach a goal while avoiding pedestrians with uncertain predicted trajectories, in addition to static obstacles (e.g., walls, pillars). This setting provides a realistic and challenging testbed for uncertainty-aware navigation. Experiments are conducted on the ZARA02 sequence from the ETH/UCY dataset~\cite{lerner2007crowds,pellegrini2009you}, which contains 1052 frames.

The robot follows the same unicycle dynamics~\eqref{eqn:uni_model} with sampling time $\Delta t = 0.4\,\mathrm{s}$, using the same range-heading observation model with beacons at $b_1=[1.0,1.0]^\top$, $b_2=[14.0,1.0]^\top$, and $b_3=[7.5,9.0]^\top$. Noise statistics are given in~\Cref{tab:exp_params}. The environment includes static obstacles with known center $p_{\mathrm{obs}} \in \mathbb{R}^2$ and radius $r_{\mathrm{obs}}$. The robot, with body radius $r_{\mathrm{robot}} = 0.4\,\mathrm{m}$, starts from an initial state $x_0$ and must reach a predefined goal position $p_{\mathrm{goal}}$ while interacting with pedestrians replayed from the dataset.

Let $\mathcal{A}_t = \{1,\dots,M_t\}$ denote the set of pedestrians observed at time $t$. For each $j \in \mathcal{A}_t$, let $p^{(j)}_{t-H+1:t} = (p^{(j)}_{t-H+1}, \dots, p^{(j)}_{t})$, with $p^{(j)}_{\tau} \in \mathbb{R}^2$, denote the most recent $H = 8$ observed positions. A pedestrian prediction module $\mathcal{G}_{\mathrm{pred}}$, implemented by Trajectron++~\cite{salzmann2020trajectron++}, produces an $N = 12$ step forecast
\begin{equation}\label{eq:pedestrian_predictor}
\big(\tilde{p}^{(j)}_{t+k|t}\big)_{\substack{j\in\mathcal{A}_t\\k=1,\dots,N}}
=
\mathcal{G}_{\mathrm{pred}}
\!\left(
\big(p^{(j)}_{t-H+1:t}\big)_{j\in\mathcal{A}_t}
\right)
\end{equation}
where $\tilde{p}^{(j)}_{t+k|t} \in \mathbb{R}^2$ denotes the predicted mean position of pedestrian $j$ at time $t+k$.

The adaptive safety margin $\delta_t$ is defined in~\Cref{subsec:safe_nav_static}, with $\kappa_{\sigma} = 1$. The effective minimum clearance is then given by
\begin{equation}
d_{\min,t}^{\mathrm{eff}} = d_{\min}^{\mathrm{base}} + \delta_t,
    \label{eq:safety_margin}
\end{equation}
where $d_{\min}^{\mathrm{base}} = r_{\mathrm{robot}} + r_{\mathrm{ped}}$ is the nominal physical clearance. Here,
$r_{\mathrm{ped}} = 0.1/\sqrt{2}\,\mathrm{m}$ is a conservative half-width approximation of a pedestrian. Thus, the safety margin adapts online to the state-estimation uncertainty while remaining grounded in physical geometry.

The MPC controller follows the formulation in~\eqref{eq:mpc_static}, with predicted pedestrian trajectories from~\eqref{eq:pedestrian_predictor} incorporated as time-varying constraints to ensure collision avoidance. The cost weights are $q=1$, $r_s=0.002$, $r_\omega=0.15$, and $q_f=10$, and the input limits are $s_{\max}=0.8\,\mathrm{m/s}$ and $\omega_{\max}=0.4\,\mathrm{rad/s}$. The effective clearance $d_{\min,t}^{\mathrm{eff}}$ from~\eqref{eq:safety_margin} is applied uniformly to both the static obstacles and predicted pedestrian positions.

We compare three estimators in closed loop: EKF with nominal noise statistics, EKF with true noise statistics, and the proposed DR-EKF. Over $50$ Monte Carlo runs, the collision rates are $100\%$, $100\%$, and $66\%$, respectively. As shown in Fig.~\ref{fig:safe_nav_ped}, the DR-EKF maintains a larger and more responsive safety margin $\delta_t$ in the vicinity of pedestrians, resulting in a noticeable reduction in collision rate compared to the nominal EKF. In contrast, the nominal EKF underestimates uncertainty, leading to insufficient clearance and frequent collisions. Among the three estimators, DR-EKF achieves the lowest collision rate, indicating that explicitly accounting for linearization residuals can play a critical role in safe navigation in dynamic environments.

\section{Conclusions}\label{sec:conc}

We introduced a residual-aware DR-EKF that unifies noise uncertainty and linearization error within a Wasserstein ambiguity framework while preserving the recursive EKF structure. By reinterpreting linearization residuals as uncertainty, the method admits a tractable SDP formulation, a computable ambiguity radius, and a deterministic recursive certificate for the prior and posterior MSE of the true nonlinear system. The resulting estimator improves accuracy under model mismatch and nonlinear effects while enabling safer downstream decision-making.

Future work includes deriving tighter bounds to reduce conservatism in the effective radius and validating the framework on hardware under realistic conditions.

\appendix

\section{Proof of~\Cref{prop:sdp}}\label{app:pf_sdp}

\begin{proof}
Condition on $\mathcal Y_{t-1}$. Since $x_t=\xprior_t+e_t^-$ and $\xprior_t$ is $\mathcal Y_{t-1}$-measurable, every estimator $\psi_t(Y_t)$ can be written as
\[
\psi_t(Y_t)=\xprior_t+\phi_t(\nu_t),
\]
for some measurable map $\phi_t$, because $Y_t=(Y_{t-1},y_t)$ and $\nu_t = y_t-h(\xprior_t)-\hat{v}_t $
is a one-to-one $\mathcal Y_{t-1}$-measurable transform of $y_t$.
Under the stage-wise local linear-Gaussian surrogate, the innovation is approximated by
\[
\hat{\nu}_t = C_t e_t^- + (v_t-\hat v_t),
\]
and the corresponding stage-wise minimax problem becomes
\begin{align}
J_t :=
\inf_{\phi_t}
\sup_{\Pdist_{\epsilon,t}\in \ambset_{\epsilon,t}(\theta_t)}
\mathbb E_{\Pdist_{\epsilon,t}}
\left[
\|e_t^- - \phi_t(\hat\nu_t)\|^2
\,\middle|\,
\mathcal Y_{t-1}
\right].
\label{eq:prop-centered}
\end{align}
The stage-wise surrogate dynamics are
\begin{align}
e_t^- &= A_{t-1}e_{t-1} + (w_{t-1}-\hat w_{t-1}), \notag\\
\hat\nu_t &= C_t e_t^- + (v_t-\hat v_t).
\label{eq:prop-surrogate}
\end{align}

We first restrict to Gaussian stacked-noise laws,
\begin{align}
\begin{split}
\Pdist_{\epsilon,t}&=\Gauss(m_{\epsilon,t},\Sigma_{\epsilon,t}), \\
\Sigma_{\epsilon,t}&=
\begin{bmatrix}
\Sigma_{w,t-1} & \Sigma_{wv,t-1} \\
\Sigma_{wv,t-1}^{\top} & \Sigma_{v,t}
\end{bmatrix}
\in \psd{n_x+n_y}.
\label{eq:prop-gaussian-class}
\end{split}
\end{align}
This restriction is introduced only to derive a lower bound.
Under the surrogate, $e_{t-1}\mid\mathcal Y_{t-1}\sim\Gauss(0,\Xpostbf)$; hence, for every Gaussian choice of $\Pdist_{\epsilon,t}$, the pair $(e_{t-1},\epsilon_t)$ is jointly Gaussian, and $(e_t^-,\hat\nu_t)$ is an affine image of $(e_{t-1},\epsilon_t)$. Therefore, $(e_t^-,\hat\nu_t)$ is jointly Gaussian, and the inner infimum in~\eqref{eq:prop-centered} is attained by the conditional mean $\mathbb E[e_t^- \mid \hat\nu_t,\mathcal Y_{t-1}]$,
which is affine in $\hat\nu_t$; see, e.g.,~\cite[Propositions~3.1--3.2]{nguyen2023bridging}. Consequently, after minimizing over the affine offset, the resulting MMSE depends only on the covariance of $(e_t^-,\hat\nu_t)$.

For Gaussian laws,
\begin{align}
&W_2^2\!\left(
\Gauss(m_{\epsilon,t},\Sigma_{\epsilon,t}),
\Gauss(\hat\epsilon_t,\hat\Sigma_{\epsilon,t})
\right) =
\|m_{\epsilon,t}-\hat\epsilon_t\|^2
+
\Bures^2\!\left(\Sigma_{\epsilon,t},\hat\Sigma_{\epsilon,t}\right).
\label{eq:prop-w2-gaussian}
\end{align}
Hence, within the Gaussian subclass~\eqref{eq:prop-gaussian-class}, replacing $m_{\epsilon,t}$ by $\hat\epsilon_t$ leaves the minimized MMSE unchanged and can only decrease the Wasserstein distance. Thus, an optimizer of the Gaussian-restricted problem may be chosen with nominal mean:
\begin{align}
\Pdist_{\epsilon,t}=\Gauss(\hat\epsilon_t,\Sigma_{\epsilon,t}).
\label{eq:prop-gaussian-nominal-mean}
\end{align}

Let
\[
\underline{\lambda}_t := \lambda_{\min}(\hat\Sigma_{\epsilon,t}) > 0.
\]
Since $\hat\Sigma_{\epsilon,t}\in\pd{n_x+n_y}$, we may, without loss of optimality, restrict all subsequent covariance optimizations to
\[
\Sigma_{\epsilon,t} \succeq \underline{\lambda}_t I_{n_x+n_y},
\]
because this redundant constraint does not change the optimal value; see~\cite[Appendix A]{jang2025distributionally} and~\cite{nguyen2023bridging}.

Under~\eqref{eq:prop-gaussian-nominal-mean}, both $e_t^-$ and $\hat\nu_t$ are centered conditional on $\mathcal Y_{t-1}$. Their conditional covariances are therefore
\begin{align}
\Xprior
&=
\Cov(e_t^- \mid \mathcal Y_{t-1})
=
A_{t-1}\Xpostbf A_{t-1}^{\top} + \Sigma_{w,t-1},
\label{eq:prop-prior}
\\
T_t
&=
\Cov(e_t^-,\hat\nu_t \mid \mathcal Y_{t-1})
=
\Xprior C_t^{\top} + \Sigma_{wv,t-1},
\label{eq:prop-T}
\\
S_t
&=
\Cov(\hat\nu_t \mid \mathcal Y_{t-1}) =
C_t\Xprior C_t^{\top}
+\Sigma_{v,t}
+C_t\Sigma_{wv,t-1}
+\Sigma_{wv,t-1}^{\top}C_t^{\top}.
\label{eq:prop-S}
\end{align}

Moreover, if we define
\[
\Omega_t :=
\begin{bmatrix}
\Xprior & \Sigma_{wv,t-1}\\
\Sigma_{wv,t-1}^\top & \Sigma_{v,t}
\end{bmatrix},
\]
then
\[
\Omega_t
=
\begin{bmatrix}
A_{t-1}\Xpostbf A_{t-1}^\top & 0\\
0 & 0
\end{bmatrix}
+
\Sigma_{\epsilon,t}
\succeq
\Sigma_{\epsilon,t}
\succeq
\underline{\lambda}_t I.
\]
Therefore,
\[
S_t
=
[\,C_t\ \ I_{n_y}\,]\,
\Omega_t\,
\begin{bmatrix}
C_t^\top\\
I_{n_y}
\end{bmatrix}
\succeq
\underline{\lambda}_t\bigl(C_tC_t^\top + I_{n_y}\bigr)
\succ 0,
\]
so $S_t$ is invertible.

For a linear estimator $\phi_t(\hat\nu_t)=K_t\hat\nu_t$, the conditional MSE is
\begin{align*}
\mathcal R_t(K_t,\Sigma_{\epsilon,t})
&=
\mathbb E\left[
\|e_t^- - K_t\hat\nu_t\|^2
\,\middle|\,
\mathcal Y_{t-1}
\right] \\
&=
\Tr(\Xprior)-2\Tr(K_tT_t^{\top})+\Tr(K_tS_tK_t^{\top}) \\
&=
\Tr\!\left(\Xprior - T_tS_t^{-1}T_t^{\top}\right)
+
\Tr\!\left(
(K_t-T_tS_t^{-1})S_t(K_t-T_tS_t^{-1})^{\top}
\right).
\end{align*}
Hence, the minimum over $K_t$ is attained at
\begin{align}
K_t^* = T_tS_t^{-1},
\label{eq:prop-gain}
\end{align}
and the minimized value is
\begin{align}
\inf_{K_t}\mathcal R_t(K_t,\Sigma_{\epsilon,t})
=
\Tr\!\left(\Xprior - T_tS_t^{-1}T_t^{\top}\right).
\label{eq:prop-mmse}
\end{align}

For Gaussian laws of the form~\eqref{eq:prop-gaussian-nominal-mean}, feasibility in $\ambset_{\epsilon,t}(\theta_t)$ is equivalent to
\begin{align}
\Bures\!\left(\Sigma_{\epsilon,t},\hat\Sigma_{\epsilon,t}\right)\le \theta_t,
\end{align}
which admits the semidefinite representation
\begin{align}
\begin{bmatrix}
\hat\Sigma_{\epsilon,t} & Z_t \\
Z_t^{\top} & \Sigma_{\epsilon,t}
\end{bmatrix}
&\succeq 0, \notag\\
\Tr\bigl(\Sigma_{\epsilon,t}+\hat\Sigma_{\epsilon,t}-2Z_t\bigr)
&\le \theta_t^2.
\label{eq:prop-bures}
\end{align}
Therefore, restricting the supremum in~\eqref{eq:prop-centered} to Gaussian laws yields the lower bound
\begin{align}
J_t
\ge
\max_{\substack{\Xprior,\Sigma_{w,t-1},\Sigma_{v,t},\\
\Sigma_{wv,t-1},Z_t}}
\Tr\!\left(\Xprior - T_tS_t^{-1}T_t^{\top}\right)
\label{eq:prop-lower}
\end{align}
subject to~\eqref{eq:prop-prior}, \eqref{eq:prop-bures}, and
\begin{align}
\Sigma_{\epsilon,t}
=
\begin{bmatrix}
\Sigma_{w,t-1} & \Sigma_{wv,t-1} \\
\Sigma_{wv,t-1}^{\top} & \Sigma_{v,t}
\end{bmatrix}
\in \psd{n_x+n_y}.
\label{eq:prop-sigmaeps}
\end{align}

To upper-bound the full problem over all laws in $\ambset_{\epsilon,t}(\theta_t)$, define
\[
M_t(\theta_t):=
\left\{
(m,\Sigma):
\|m-\hat\epsilon_t\|^2 + \Bures^2(\Sigma,\hat\Sigma_{\epsilon,t}) \le \theta_t^2,\;
\Sigma \in \psd{n_x+n_y}
\right\}.
\]
By the Gelbrich inequality, every law in $\ambset_{\epsilon,t}(\theta_t)$ induces a mean--covariance pair in $M_t(\theta_t)$. Thus, restricting the estimator to be affine gives the standard Gelbrich upper bound, whereas~\eqref{eq:prop-lower} provides the corresponding Gaussian lower bound. The same conditional restricted primal--dual sandwich argument as in~\cite[Theorem~4.1 and Corollary~4.1]{nguyen2023bridging} therefore applies here: since $\Qhat_{\epsilon,t}$ is Gaussian, the image of the Gaussian Wasserstein ball under the mean--covariance projection is exactly $M_t(\theta_t)$, and the risk of any affine estimator depends on the adversary only through these moments. Hence the lower bound~\eqref{eq:prop-lower} is tight, and $J_t$ equals the maximum of
\[
\Tr\!\left(\Xprior - T_tS_t^{-1}T_t^\top\right)
\]
over the feasible covariance set. In particular, there exists a least-favorable stacked-noise law, and it may be chosen Gaussian with mean $\hat\epsilon_t$ and covariance $\Sigma_{\epsilon,t}^*$.

Finally, introduce $\Xpost$ and impose
\begin{align}
\Xpost \preceq \Xprior - T_tS_t^{-1}T_t^{\top}.
\label{eq:prop-post-bound}
\end{align}
Since $S_t\succ0$, the Schur complement gives
\begin{align}
\Xpost \preceq \Xprior - T_tS_t^{-1}T_t^{\top}
\iff
\begin{bmatrix}
\Xprior - \Xpost & T_t \\
T_t^{\top} & S_t
\end{bmatrix}
\succeq 0.
\label{eq:prop-schur}
\end{align}
Because the objective maximizes $\Tr(\Xpost)$, the inequality~\eqref{eq:prop-post-bound} is tight at optimality. Collecting the propagation constraint, the Bures LMI, the PSD block structure of $\Sigma_{\epsilon,t}$, and the Schur complement condition~\eqref{eq:prop-schur} yields the SDP~\eqref{eq:sdp_t}.

Let $\Xpostopt$ and $\Sigma_{\epsilon,t}^*$ be optimal for~\eqref{eq:sdp_t}, with associated blocks $(T_t^*,S_t^*)$. Then the optimal value of the stage-wise local linear-Gaussian surrogate of~\eqref{eq:stagewise-minimax} equals $\Tr(\Xpostopt)$, the least-favorable stacked-noise law is $\Gauss(\hat\epsilon_t,\Sigma_{\epsilon,t}^*)$, and the minimax-optimal estimator is
\begin{align}
\psi_t^*(Y_t)
=
\xprior_t + T_t^*(S_t^*)^{-1}\hat\nu_t,
\end{align}
which is exactly the innovation-form update~\eqref{eq:dr_estimator}. This proves the proposition.

\end{proof}

\section{Proof of~\Cref{lem:quad_rem_bound}}\label{app:pf_lem1}
\begin{proof}
We prove the bound for $r^f_{t-1}(e)$; the proof for $r^h_t(e)$ is identical. Fix $e$ such that the segment $\{\xpost_{t-1}+se: s\in[0,1]\}$ lies in the neighborhood of \Cref{assump:local_curv}. By the fundamental theorem of calculus,
\begin{align*}
r^f_{t-1}(e)
&=
f(\xpost_{t-1}+e,u_{t-1})
-
f(\xpost_{t-1},u_{t-1})
-
A_{t-1}e \\
= &
\int_0^1
\left(
\frac{\partial f}{\partial x}(\xpost_{t-1}+se,u_{t-1})
-
\frac{\partial f}{\partial x}(\xpost_{t-1},u_{t-1})
\right)e\,ds.
\end{align*}
Taking norms and using \Cref{assump:local_curv}, we obtain
\begin{align*}
\|r^f_{t-1}(e)\| 
&\le
\int_0^1
\left\|
\frac{\partial f}{\partial x}(\xpost_{t-1}+se,u_{t-1})
-
\frac{\partial f}{\partial x}(\xpost_{t-1},u_{t-1})
\right\|
\|e\|\,ds \\
&\le
\int_0^1 sL_f\|e\|^2\,ds =
\frac{L_f}{2}\|e\|^2.
\end{align*}
The bound for $r^h_t(e)$ follows in the same way, with $\xpost_{t-1}$ and $A_{t-1}$ replaced by $\xprior_t$ and $C_t$, and $L_f$ replaced by $L_h$.
\end{proof}

\section{Proof of~\Cref{lem:oracle_residual_radii}}\label{app:pf_lem2}
\begin{proof}
For $t\ge 1$,~\Cref{lem:quad_rem_bound} gives
\[
\|r^f_{t-1}(e_{t-1})\|^2
\le
\frac{L_f^2}{4}\|e_{t-1}\|^4,
\qquad
\|r^h_t(e_t^-)\|^2
\le
\frac{L_h^2}{4}\|e_t^-\|^4.
\]
Taking expectations and applying~\Cref{ass:fourth_moment} gives
\[
\bigl(\mathbb E[\|r^f_{t-1}(e_{t-1})\|^2]\bigr)^{1/2}
\le
\frac{L_f}{2}\alpha_f\,\mathbb E[\|e_{t-1}\|^2]
=
\eta^f_{t-1},
\]
and
\[
\bigl(\mathbb E[\|r^h_t(e_t^-)\|^2]\bigr)^{1/2}
\le
\frac{L_h}{2}\alpha_h\,\mathbb E[\|e_t^-\|^2]
=
\eta_t^h.
\]
At $t=0$, there is no process-side residual term, so $\eta^f_{-1}=0$.
\end{proof}

\section{Proof of~\Cref{lem:surrogate_posterior}}\label{app:pf_lem_surrogate}

\begin{proof}
We prove by induction that $\mathbb{E}[\|\hat e_t\|^2]\le \bar s_t^2$ for all $t\ge 0$. By the Wasserstein ambiguity assumption and $1$-Lipschitzness of coordinate projections, all marginals of $\mathcal L(\epsilon_t)$ inherit the same radius $\theta_{\epsilon,t}$.

\emph{Base case ($t=0$):}
Using the definition of $\hat e_0$, Minkowski's inequality, and \Cref{ass:det_env},
\[
\bigl(\mathbb{E}[\|\hat e_0\|^2]\bigr)^{1/2}
\le
\bar m_0\bigl(\mathbb{E}[\|e_0^-\|^2]\bigr)^{1/2}
+
\bar k_0\bigl(\mathbb{E}[\|v_0-\hat v_0\|^2]\bigr)^{1/2}.\]
By the Gelbrich inequality and the trace-tube bound~\cite[Proposition~1]{jang2025distributionally},
$\bigl(\mathbb{E}[\|e_0^-\|^2]\bigr)^{1/2}
\le
\sqrt{\Tr(\Xnom)}+\theta_{\epsilon,0}$ and
$\bigl(\mathbb{E}[\|v_0-\hat v_0\|^2]\bigr)^{1/2}
\le
\sqrt{\Tr(\hat\Sigma_{v,0})}+\theta_{\epsilon,0}$.
Substituting yields $\bigl(\mathbb{E}[\|\hat e_0\|^2]\bigr)^{1/2}\le \bar s_0$.

\emph{Induction step:}
Fix $t\ge 1$ and assume $\mathbb{E}[\|\hat e_{t-1}\|^2]\le \bar s_{t-1}^2$.
From~\eqref{eqn:ehat_pred}, Minkowski's inequality, and \Cref{ass:det_env},
\[
\bigl(\mathbb{E}[\|\hat e_t^-\|^2]\bigr)^{1/2}
\le
\bar a_{t-1}\bar s_{t-1}
+
\bigl(\mathbb{E}[\|w_{t-1}-\hat w_{t-1}\|^2]\bigr)^{1/2}.
\]
Applying the same Wasserstein bound to $w_{t-1}$ gives
$\bigl(\mathbb{E}[\|w_{t-1}-\hat w_{t-1}\|^2]\bigr)^{1/2}
\le
\sqrt{\Tr(\hat\Sigma_{w,t-1})}+\theta_{\epsilon,t}$.
Hence
\[
\bigl(\mathbb{E}[\|\hat e_t^-\|^2]\bigr)^{1/2}
\le
\bar a_{t-1}\bar s_{t-1}
+
\sqrt{\Tr(\hat\Sigma_{w,t-1})}
+
\theta_{\epsilon,t}.\]

From~\eqref{eqn:ehat}, Minkowski's inequality, and \Cref{ass:det_env},
\[
\bigl(\mathbb{E}[\|\hat e_t\|^2]\bigr)^{1/2} \! \!
\le
\bar m_t\bigl(\mathbb{E}[\|\hat e_t^-\|^2]\bigr)^{1/2} \! 
+
\bar k_t\bigl(\mathbb{E}[\|v_t-\hat v_t\|^2]\bigr)^{1/2} \! \! \leq \! \bar s_t,
\]
where we applied the same Wasserstein bound to $v_t$, thus completing the induction.
\end{proof}

\section{Proof of~\Cref{lem:onepass_upper_bound}}\label{app:pf_lem4}

\begin{proof}
We treat the cases $t=0$ and $t\ge 1$ separately. We show $\eta_{t-1}^f \le \bar\eta_{t-1}^f$ and $\eta_t^h \le \bar\eta_t^h$, which together imply the claimed bound on $\theta_{\epsilon,t}^{\mathrm{eff}}$.

\emph{Case $t=0$.}
By \Cref{ass:nominal_radius}, the $x_0$-marginal satisfies
\[
W_2(\mathcal{L}(x_0),\Qhat^-_{x,0}) \le \theta_{\epsilon,0}.\]
Applying the Wasserstein bound via the Gelbrich inequality and the trace-tube bound~\cite[Proposition~1]{jang2025distributionally} yields
\[
\mathbb{E}[\|e_0^-\|^2]
\le
\bigl(\sqrt{\Tr(\Xnom)}+\theta_{\epsilon,0}\bigr)^2
=:
\gamma_0^2.\]
Therefore,
\[
\eta_0^h
\le
\frac{L_h}{2}\alpha_h\gamma_0^2
=
\bar\eta_0^h.\]

\emph{Case $t\ge 1$.}
Since $\mathbb{E}[\|e_{t-1}\|^2]\le \bar V_{t-1}^2$, we have
$\eta^f_{t-1}
\le
\frac{L_f}{2}\alpha_f\bar V_{t-1}^2
=
\bar\eta^f_{t-1}$.
Applying Minkowski's inequality to the prior error dynamics~\eqref{eq:error-dynamics-prior} and using \Cref{ass:det_env}, we obtain
\[
\bigl(\mathbb E[\|e_t^-\|^2]\bigr)^{1/2}
\le
\bar a_{t-1}\bigl(\mathbb E[\|e_{t-1}\|^2]\bigr)^{1/2} +
\bigl(\mathbb E[\|w_{t-1}-\hat w_{t-1}\|^2]\bigr)^{1/2}
+
\bigl(\mathbb E[\|r^f_{t-1}(e_{t-1})\|^2]\bigr)^{1/2}.
\]

By \Cref{ass:nominal_radius}, the $w_{t-1}$-marginal satisfies
$W_2(\mathcal L(w_{t-1}),\Qhat_{w,t-1}) \le \theta_{\epsilon,t}$.
Applying again the Gelbrich inequality and the trace-tube bound gives
$\bigl(\mathbb E[\|w_{t-1}-\hat w_{t-1}\|^2]\bigr)^{1/2}
\le
\sqrt{\Tr(\hat\Sigma_{w,t-1})}+\theta_{\epsilon,t}$.
Combining this with $\mathbb E[\|e_{t-1}\|^2]\le \bar V_{t-1}^2$ and \Cref{lem:oracle_residual_radii}, we obtain
\[
\bigl(\mathbb E[\|e_t^-\|^2]\bigr)^{1/2}
\le
\bar a_{t-1}\bar V_{t-1}
+
\sqrt{\Tr(\hat\Sigma_{w,t-1})}
+
\theta_{\epsilon,t}
+
\bar\eta^f_{t-1}
=
\gamma_t.
\]
Hence $\mathbb E[\|e_t^-\|^2]\le \gamma_t^2$, and therefore
$\eta_t^h
\le
\frac{L_h}{2}\alpha_h\gamma_t^2
=
\bar\eta_t^h$.

Finally, substituting the bounds on $\eta_{t-1}^f$ and $\eta_t^h$ into~\eqref{eq:effective-radius-oracle} gives
$\theta_{\epsilon,t}^{\mathrm{eff}}
\le
\bar\theta_{\epsilon,t}^{\mathrm{eff}}$,
which proves the claim.
\end{proof}

\section{Proof of~\Cref{thm:dr_certificate}}\label{app:pf_thm1}
\begin{proof}
We prove part~(iii) together with the auxiliary bounds
\[
\bigl(\mathbb E[\|\delta_t^-\|^2]\bigr)^{1/2}\le \rho_t^-, \qquad\bigl(\mathbb E[\|\delta_t\|^2]\bigr)^{1/2}\le \rho_t\]
by induction on $t$, where $\delta_0^- := 0$, $\delta_0 := -K_0^*r_0^h(e_0^-)$, and, for $t\ge 1$, $\delta_t^- := e_t^- - \hat e_t^-$ and $\delta_t := e_t - \hat e_t$. Parts~(i) and~(ii) are obtained at each stage by applying \Cref{lem:onepass_upper_bound}.

For each $t\ge 0$, write $\tilde\epsilon_t=\epsilon_t+\Delta_t$, where
$\Delta_0=\begin{bmatrix}0\\ r_0^h(e_0^-)\end{bmatrix}$ and, for $t\ge 1$,
$\Delta_t=\begin{bmatrix}r_{t-1}^f(e_{t-1})\\ r_t^h(e_t^-)\end{bmatrix}$.
Using the identity coupling,
\[
W_2(\mathcal L(\tilde\epsilon_t),\mathcal L(\epsilon_t))
\le
\bigl(\mathbb E[\|\Delta_t\|^2]\bigr)^{1/2}
\le
\eta_{\epsilon,t}.\]

\emph{Base case ($t=0$).}
By \Cref{lem:onepass_upper_bound},
\[\mathbb E[\|e_0^-\|^2]\le \gamma_0^2, \qquad \eta_0^h\le \bar\eta_0^h, \qquad
\theta_{\epsilon,0}^{\mathrm{eff}}\le \bar\theta_{\epsilon,0}^{\mathrm{eff}}.\]
Hence, using \Cref{ass:nominal_radius}, \[
W_2(\mathcal L(\tilde\epsilon_0),\Qhat_{\epsilon,0})
\le
W_2(\mathcal L(\tilde\epsilon_0),\mathcal L(\epsilon_0))
+
W_2(\mathcal L(\epsilon_0),\Qhat_{\epsilon,0})
\le
\theta_{\epsilon,0}^{\mathrm{eff}}
\le
\bar\theta_{\epsilon,0}^{\mathrm{eff}}.\]
Therefore $\mathcal L(\tilde\epsilon_0)\in \ambset_{\epsilon,0}(\bar\theta_{\epsilon,0}^{\mathrm{eff}})$, proving~(i), while $\mathbb E[\|e_0^-\|^2]\le \gamma_0^2$ proves~(ii).

Moreover,
\[
\bigl(\mathbb E[\|\delta_0\|^2]\bigr)^{1/2}
\le
\bar k_0\bigl(\mathbb E[\|r_0^h(e_0^-)\|^2]\bigr)^{1/2}
\le
\bar k_0\,\bar\eta_0^h
=
\rho_0\] and $\bigl(\mathbb E[\|\delta_0^-\|^2]\bigr)^{1/2}=0=\rho_0^-$.
By \Cref{lem:surrogate_posterior}, $\mathbb E[\|\hat e_0\|^2]\le \bar s_0^2$.
Since $e_0=\hat e_0+\delta_0$, Minkowski's inequality gives
$\bigl(\mathbb E[\|e_0\|^2]\bigr)^{1/2}
\le
\bar s_0+\rho_0
=
\bar V_0$.
Hence 
\[
\mathbb E[\|e_0\|^2]\le \bar V_0^2,
\]
proving~(iii).

\emph{Induction step.}
Fix $t\ge 1$ and assume
$\mathbb E[\|e_{t-1}\|^2]\le \bar V_{t-1}^2$
and
$\bigl(\mathbb E[\|\delta_{t-1}\|^2]\bigr)^{1/2}\le \rho_{t-1}$.
Applying \Cref{lem:onepass_upper_bound} yields
\[
\eta^f_{t-1}\le \bar\eta^f_{t-1},\qquad
\mathbb E[\|e_t^-\|^2]\le \gamma_t^2, \qquad \eta_t^h\le \bar\eta_t^h,\]
as well as
$\theta_{\epsilon,t}^{\mathrm{eff}}\le \bar\theta_{\epsilon,t}^{\mathrm{eff}}$.
The bound $\mathbb E[\|e_t^-\|^2]\le \gamma_t^2$ proves~(ii).

Moreover, using \Cref{ass:nominal_radius} and the triangle inequality,
$
W_2(\mathcal L(\tilde\epsilon_t),\Qhat_{\epsilon,t})
\le
\bar\theta_{\epsilon,t}^{\mathrm{eff}}$.
Thus $\mathcal L(\tilde\epsilon_t)\in \ambset_{\epsilon,t}(\bar\theta_{\epsilon,t}^{\mathrm{eff}})$, proving~(i).

From the error decompositions,
$\delta_t^- = A_{t-1}\delta_{t-1}+r_{t-1}^f(e_{t-1})$ and
$\delta_t = (I-K_t^*C_t)\delta_t^- - K_t^*r_t^h(e_t^-)$.
Applying Minkowski's inequality, \Cref{ass:det_env}, the induction hypothesis, and the bounds above, we obtain
\[
\bigl(\mathbb E[\|\delta_t^-\|^2]\bigr)^{1/2}
\le
\bar a_{t-1}\rho_{t-1}+\bar\eta_{t-1}^f
=
\rho_t^- \]
and 
\[
\bigl(\mathbb E[\|\delta_t\|^2]\bigr)^{1/2}
\le
\bar m_t\,\rho_t^-+\bar k_t\,\bar\eta_t^h
=
\rho_t.\]
Finally, \Cref{lem:surrogate_posterior} gives $\mathbb E[\|\hat e_t\|^2]\le \bar s_t^2$. Since $e_t=\hat e_t+\delta_t$, Minkowski's inequality yields
\[
\bigl(\mathbb E[\|e_t\|^2]\bigr)^{1/2}
\le
\bar s_t+\rho_t
=
\bar V_t.\]
Hence 
\[
\mathbb E[\|e_t\|^2]\le \bar V_t^2,\] proving~(iii).
\end{proof}

% \section{Implementation details of DR-EKF}

%%%%%%%%%%% %%%%%%%%%%%%%%%%%%%%%%%%%%%%%%%%%%%%%%%%%%%%%%%%%%%%%%%%%%%%%%%%%%%%%
\bibliographystyle{IEEEtran}
\bibliography{ref}

 \vspace{-0.15in}
\end{document}